\tikzstyle{box}=[shape=rectangle, text height=1.5ex, text depth=0.25ex, yshift=0.5mm, fill=white, draw=black, minimum height=5mm, yshift=-0.5mm, minimum width=5mm, font={\small}]
\tikzstyle{Z dot}=[inner sep=0mm, minimum size=2mm, shape=circle, draw=black, fill={rgb,255: red,216; green,248; blue,216}, tikzit fill={rgb,255: red,216; green,248; blue,216}]
\tikzstyle{Z phase dot}=[minimum size=4.75mm, font={\footnotesize}, shape=rectangle, rounded corners=1.9mm, inner sep=0.1mm, outer sep=-2mm, scale=0.8, tikzit shape=circle, draw=black, fill={rgb,255: red,216; green,248; blue,216}, tikzit draw=blue, tikzit fill={rgb,255: red,216; green,248; blue,216}]
\tikzstyle{X dot}=[Z dot, shape=circle, draw=black, fill={rgb,255: red,232; green,165; blue,165}, tikzit fill={rgb,255: red,232; green,165; blue,165}]
\tikzstyle{X phase dot}=[Z phase dot, tikzit shape=circle, tikzit fill={rgb,255: red,232; green,165; blue,165}, fill={rgb,255: red,232; green,165; blue,165}, font={\footnotesize}, tikzit draw=blue]
\tikzstyle{XD dot}=[shape=XDdot, inner sep=2pt, draw=black, tikzit fill={rgb,255: red,255; green,191; blue,191}]
\tikzstyle{XD phase dot}=[shape=XDdotphase, minimum size=4.75mm, font={\footnotesize}, inner sep=.1mm, outer sep=0mm, scale=0.8, tikzit shape=circle, rounded corners=1.9mm, draw=black, tikzit fill={rgb,255: red,255; green,191; blue,191}, tikzit draw=blue]
\tikzstyle{zn}=[shape=zn, tikzit draw=black, draw=black, inner sep=2pt]
\tikzstyle{hadamard}=[fill=white, draw=black, shape=rectangle, inner sep=0mm, minimum height=2.75mm, minimum width=2.75mm]
\tikzstyle{hz}=[hadamard, fill={rgb,255: red,216; green,248; blue,216}, shape=rectangle, tikzit fill={rgb,255: red,216; green,248; blue,216}, minimum height=2 mm, minimum width=1.25 mm, tikzit draw=black]
\tikzstyle{hx}=[hadamard, fill={rgb,255: red,232; green,165; blue,165}, shape=rectangle, tikzit fill={rgb,255: red,232; green,165; blue,165}, minimum height=2 mm, minimum width=1.25 mm, tikzit draw=black]
\tikzstyle{vertex}=[inner sep=0mm, minimum size=1mm, shape=circle, draw=black, fill=black]
\tikzstyle{vertex set}=[inner sep=0mm, minimum size=1mm, shape=circle, draw=black, fill=white, font={\footnotesize\boldmath}]
\tikzstyle{meter}=[draw, fill=white, minimum width=2em, minimum height=1.5em, rectangle, path picture={\draw ([shift={(.1,.24)}]path picture bounding box.south west) to[bend left=50] ([shift={(-.1,.24)}]path picture bounding box.south east);\draw[-{Latex[scale=0.6]}] ([shift={(0,.1)}]path picture bounding box.south) -- ([shift={(.3,-.1)}]path picture bounding box.north);}, tikzit shape=rectangle]
\tikzstyle{white dot}=[Z dot]
\tikzstyle{gray dot}=[X dot]
\tikzstyle{white phase dot}=[Z phase dot]
\tikzstyle{gray phase dot}=[X phase dot]
\tikzstyle{red ket}=[fill={rgb,255: red,232; green,165; blue,165}, draw=black, shape=isosceles triangle, tikzit fill={rgb,255: red,232; green,165; blue,165}, tikzit draw=black, inner sep=0 mm, outer sep=2 mm]
\tikzstyle{tiny none}=[none, font={\tiny}]
\tikzstyle{green ket}=[fill={rgb,255: red,216; green,248; blue,216}, draw=black, shape=isosceles triangle, tikzit fill={rgb,255: red,216; green,248; blue,216}, tikzit draw=black, inner sep=0 mm, outer sep=2 mm]
\tikzstyle{filament}=[hadamard, fill=yellow, draw=none, minimum height=0.01mm]
\tikzstyle{unit circle}=[shape=circle, minimum size=42.5 mm, fill=none, draw={rgb,255: red,223; green,223; blue,223}, tikzit draw={rgb,255: red,223; green,223; blue,223}]
\tikzstyle{new style 0}=[shape=ellipse, minimum height=10 mm, minimum width=100 mm, fill=white, draw=black]
\tikzstyle{gate}=[box, minimum height=10mm, minimum width=10mm]
\tikzstyle{2 control}=[vertex set, draw=blue, inner sep=0.5pt]
\tikzstyle{1 control}=[2 control, draw=red]
\tikzstyle{0 control}=[2 control, draw=black]
\tikzstyle{small dot}=[vertex, minimum size=1 mm, draw=black, tikzit draw=black, tikzit fill=black, tikzit shape=circle]
\tikzstyle{tallbox}=[box, minimum height=12mm]
\tikzstyle{targ}=[vertex set, minimum size=0.5mm, inner sep=-0.5mm, tikzit shape=circle, shape=circle, tikzit draw=black]
\tikzstyle{hadamardbox}=[hadamard, xslant=0]
\tikzstyle{pink dot}=[X dot, fill={rgb,255: red,255; green,200; blue,240}]
\tikzstyle{pink phase dot}=[Z phase dot, fill={rgb,255: red,255; green,200; blue,240}]
\tikzstyle{directedarrow}=[draw={rgb,255: red,223; green,223; blue,223}, ->, tikzit draw={rgb,255: red,223; green,223; blue,223}, line width=1 pt]
\tikzstyle{simple}=[-]
\tikzstyle{hadamard edge}=[-, color={rgb,255: red,0; green,100; blue,248}, dashed, dash pattern=on 2pt off 0.7pt, tikzit draw={rgb,255: red,0; green,100; blue,248}]
\tikzstyle{brace edge}=[-, tikzit draw=blue, decorate, decoration={brace,amplitude=1mm,raise=-1mm}]
\tikzstyle{gray}=[-, draw={rgb,255: red,223; green,223; blue,223}, line width=1 pt]
\tikzstyle{arrow}=[<-, draw={rgb,255: red,128; green,128; blue,128}]
\tikzstyle{double-arrow}=[draw={rgb,255: red,128; green,128; blue,128}, <->]
\tikzstyle{dashed edge}=[-, dashed, dash pattern=on 2pt off 0.5pt, draw=black]
\tikzstyle{diredge}=[->]
\tikzstyle{double edge}=[-, double, shorten <=-1mm, shorten >=-1mm, double distance=2pt]
\tikzstyle{thin}=[-, line width=0.05mm]
\tikzstyle{thin gray}=[-, draw={rgb,255: red,223; green,223; blue,223}, line width=0.05mm]
\tikzstyle{less thin}=[-, line width=0.1mm]
\tikzstyle{dashed gray edge}=[-, dashed edge, draw={rgb,255: red,128; green,128; blue,128}]
\tikzstyle{light right directed arrow}=[->, directedarrow, draw={rgb,255: red,223; green,223; blue,223}, line width=0.2mm]
\tikzstyle{diredge0.3}=[->, line width=0.3 mm]
\tikzstyle{less thin gray}=[-, draw={rgb,255: red,223; green,223; blue,223}]
\tikzstyle{dashed thin purple}=[-, dashed, line width=0.1mm, draw={rgb,255: red,128; green,106; blue,219}]
\tikzstyle{hadamardedge}=[-, color={rgb,255: red,100; green,200; blue,248}, dashed, dash pattern=on 2pt off 0.7pt, tikzit draw={rgb,255: red,120; green,220; blue,248}]
\tikzstyle{line0.3}=[-, line width=0.3mm]
\tikzstyle{light blue line}=[-, color={rgb,255: red,100; green,200; blue,248}, tikzit draw={rgb,255: red,100; green,200; blue,248}]
\tikzstyle{pink0.2}=[-, color={rgb,255: red,248; green,100; blue,200}, line width=0.2mm, tikzit draw={rgb,255: red,248; green,100; blue,200}]
\newcommand{\ceil}[1]{\left\lceil #1 \right\rceil}
\renewcommand\vec[1]{\overrightarrow{#1}}
\newcommand{\minu}{\raisebox{-1pt}{\texttt{-}}}
\begin{document}

\title{Scaling W state circuits in the qudit Clifford hierarchy}

\author{Lia Yeh}
\orcid{0000-0003-2704-4057}
\affiliation{%
    \institution{Department of Computer Science, University of Oxford}
    \city{Oxford}
    \country{United Kingdom}
}
\affiliation{%
    \institution{Quantinuum Ltd.}
    \city{17 Beaumont Street, Oxford OX1 2NA}
    \country{United Kingdom}
}
\email{lia.yeh@cs.ox.ac.uk}

\begin{abstract}
    We identify a novel qudit gate which we call the $\sqrt[d]{Z}$ gate.
    This is an alternate generalization of the qutrit $T$ gate to any odd prime dimension $d$, in the $d^{\text{th}}$ level of the Clifford hierarchy.
    Using this gate which is efficiently realizable fault-tolerantly should a certain conjecture hold, we deterministically construct in the Clifford+$\sqrt[d]{Z}$ gate set, $d$-qubit $W$ states in the qudit $\{\ket{0},\ket{1}\}$ subspace.  For qutrits, this gives deterministic and fault-tolerant constructions for the qubit $W$ state of sizes three with $T$ count 3, six, and powers of three.

    Furthermore, we adapt these constructions to recursively scale the $W$ state size to arbitrary size $N$, in $O(N)$ gate count and $O(\text{log }N)$ depth.  This is moreover deterministic for any size qubit $W$ state, and for any prime $d$-dimensional qudit $W$ state, size a power of $d$.
    
    For these purposes, we devise constructions of the $\ket{0}$-controlled Pauli $X$ gate and the controlled Hadamard gate in any prime qudit dimension.  These decompositions, for which exact synthesis is unknown in Clifford+$T$ for $d > 3$, may be of independent interest.
\end{abstract}

\begin{CCSXML}
<ccs2012>
    <concept>
        <concept_id>10010520.10010521.10010542.10010550</concept_id>
        <concept_desc>Computer systems organization~Quantum computing</concept_desc>
        <concept_significance>400</concept_significance>
        </concept>
    <concept>
        <concept_id>10010583.10010786.10010813.10011726.10011727</concept_id>
        <concept_desc>Hardware~Quantum communication and cryptography</concept_desc>
        <concept_significance>500</concept_significance>
        </concept>
    <concept>
        <concept_id>10010583.10010786.10010813.10011726.10011728</concept_id>
        <concept_desc>Hardware~Quantum error correction and fault tolerance</concept_desc>
        <concept_significance>300</concept_significance>
        </concept>
    </ccs2012>
\end{CCSXML}

\ccsdesc[500]{Hardware~Quantum communication and cryptography}
\ccsdesc[400]{Computer systems organization~Quantum computing}
\ccsdesc[300]{Hardware~Quantum error correction and fault tolerance}

\keywords{qudit, multipartite entanglement, W state, Clifford+T, circuits}

\maketitle

\section{Introduction}
Understanding the structure of multipartite entanglement has long been of fundamental interest.  The two multipartite entangled states most important in the literature are the $GHZ$ state and the $W$ state:
\begin{definition}\label{def:qubitWstate}
    The $N$-qubit $W$ state is
    \begin{equation}
        \ket{W_N} = \frac{1}{\sqrt{N}} \left( \ket{10...0} + \ket{010...0} + ... + \ket{0...01} \right).
    \end{equation}
\end{definition}
Characterization of tripartite entangled states by D\"ur, Vidal, and Cirac found there to be only two distinct SLOCC-equivalence classes (equivalent up to stochastic local operations and classical communication): the $GHZ$-class and the $W$-class~\cite{DurW2000tripartiteclasses}.  As three-qubit states, $GHZ$ states and the $W$ states correspond respectively to a special and an anti-special Frobenius algebra, defining a compositional structure which can be used to compose arbitrarily large multipartite qubit states from smaller ones~\cite{CoeckeB2010multipartite}.

In contrast to the $GHZ$ state, the entanglement of the generalized $W$ state is highly robust against decoherence and particle loss~\cite{ZhaoMJ2010mixedWstatesmonogamy}.
Robust and scalable multipartite entanglement is essential to many applications of quantum technology, including quantum communication, encryption, and distributed quantum computing.  The W state in particular has been employed in protocols such as anonymous transmission in a quantum network~\cite{LipinskaV2018anonnetW}, photonic error detection~\cite{VijayanMK2020Werrordetect}, quantum memory~\cite{LiC2020quantummem}, quantum secret sharing~\cite{TsaiCW2019Wsecretsharing}, and deterministic quantum communication~\cite{TsaiCW2013detqcomm}.
A notable example is the classical distributed computing task of leader election, which is commonly prerequisite for graph-theoretic tasks such as computing minimum spanning trees.  No known classical algorithm satisfies the same conditions achievable by quantum leader election algorithms for anonymous networks~\cite{vanMeter2014quantumnetworking}, for which (unitary transforms of) W states are not only sufficient, but necessary for~\cite{D'Hondt2006wandghz}.

A number of proposals exist in the literature for building larger (more than three qubits) W states.
Some architectures are more suited for generating large W states directly, with proposed experimental schemes including single photon 8-qubit W state~\cite{HeilmannR2015w8} and arbitrary size perfect W states (asymmetric as to be better suited to teleportation and superdense coding)~\cite{SwainM2022nWperfect}.
One approach~\cite{GrafeM2014photonWstates,DikerF2016detconstructWstates} is analogous to reservoir sampling, where a series of different sized rotations distributes the entanglement uniformly with quadratic gate count, which could be made asymptotically linear with a rather involved binary-to-unary conversion~\cite{GidneyC2018Wstatereservoirsampling}.
In another vein, two approaches to generate larger W states from smaller W states, are by fusing two smaller W states together~\cite{OzdemirS2011fuseW, BuguS2013WfuseFredkin, LiK2016fuseW} for instance in a cavity QED architecture~\cite{ZangXP2015fuseWQED,JiYQ2017fuseWQED}, or by expanding a W state with ancillae~\cite{LiK2016fuseW}.
In addition, a technique compatible with the above methods is entanglement concentration protocols~\cite{BennettC1996concentrateentanglement} which concentrate a less entangled W state on an arbitrary number of qubits into a maximally entangled W state~\cite{ShengY2015nWparitycheck}.

Although these techniques can generate arbitrarily large (maximally entangled) W states in principle, their scalability has a practical limitation due to physical errors.
These protocols were not intended for compatibility with quantum error correction codes and therefore are prone to propagation of errors, an issue that leads to their performance falling off with increasingly large system sizes.  This can be especially concerning if the use case pertains to fairness (for instance quantum algorithms for voting~\cite{BaoN2017voting}) or cryptography (for instance quantum key distribution~\cite{WangHW2022qkdW}).

As a result, there is a need to investigate improved fault-tolerant protocols to prepare these highly entangled resource states.
In this setting, the most probable errors on a subset of operations can be efficiently correctable.
This transition necessitates a paradigm shift in the requirements and cost model that must be considered, to minimize the errors incurred by non-correctable operations.
To the best of our knowledge, there has been limited investigation into fault-tolerant W state generation beyond a StackExchange post in 2018 by Gidney~\cite{GidneyC2018Wstatemagic}.

In this work, we augment fault-tolerant protocols to construct large W states deterministically (i.e. with $100\%$ success probability of the protocol in theory, independent of physical errors) in linear gate count and logarithmic depth.
We do this by leveraging the higher levels of qudits as a computational resource, thus removing the need for ancillae and post-selection for constructing certain size W states.
The higher levels of many qubit devices are addressable and support operations making them viable as \emph{qudits} --- the generalization of qubits (for which $d=2$) to $d$-dimensional Hilbert spaces.
These enable applications including qudit algorithms~\cite{GedikZ2015quditalgperm,WangY2020quditsreview}, improved magic state distillation noise thresholds~\cite{CampbellE2014quditmsdthresholds}, communication noise resilience~\cite{CozzolinoD2019quditcommunication}, Bell inequality violation~\cite{VertesiT2010bellxpmtqudits}, testing quantum gravity~\cite{TillyJ2021qgravity}, and emulation of binary circuits~\cite{LanyonBP2009quditemu, GokhaleP2019asymptotic,NikolaevaAS2022mctqutrit}.  Moreover, qubit circuit emulation of an arbitrary qudit unitary has been argued to be less asymptotically efficient than qudit circuit decomposition (except in the case where $d$ is a power of 2)~\cite{BullockS2005asympoptqudit}.

Qudit computation has been proposed or experimentally demonstrated on a number of quantum processors~\cite{MatoK2022adaptivecompilequdits} including cold atoms~\cite{AndersonBE2015quditcoldatoms,KasperV2022quditcoldatoms}, nuclear magnetic resonance~\cite{GedikZ2015quditalgperm}, nuclear spins~\cite{GodfrinC2017quditnuclearspins}, photonic quantum computing~\cite{RingbauerM2018ququartphotonic,HuXM2018ququart}, Rydberg atoms~\cite{WeggemansJ2022quditRydberg}, superconducting quantum computing~\cite{CaoS2023ququartemuvqe,BlokM2021scrambling,YeB2018cphasephoton,YurtalanM2020Walsh-Hadamard,HillA2021doublycontrolled}, and trapped ions~\cite{RingbauerM2022quditions}.  As of recent, error rates of universal qudit processors were shown to be competitive relative to qubit processors~\cite{RingbauerM2018ququartphotonic,ChiY2022quditphotonicprocessor}, and maximally entangling qudit gates have been physically realized~\cite{HrmoP2023quditions}.

Compared to the qubit case, the entanglement properties of qudits are less understood.  Experimentally, qudit entanglement has demonstrated improved noise robustness compared to qubits~\cite{SrivastavV2022noiserobustquditsteering}, and leveraging qutrit states can improve fidelity of qubit operations~\cite{BaekkegaardT2019scqubitqutrit,BrownT2022qutritqubitentangle}.
With regards to theory, although the $W$ class has been generalized to qudits with similar properties of monogamy relations~\cite{KimJS2008Wclass}, entanglement classification of multiple qudit states is a hard task made more difficult by the existence of a form of entanglement that does not exist for qubits~\cite{PoppC2023boundEntanglementQuquarts}.
\begin{definition}\label{def:quditWstate}
    The $N$-qudit $W$ state is often defined as~\cite{KimJS2008Wclass}
    \begin{equation}
        \ket{W_N} = \frac{1}{\sqrt{(d\minu 1) N}} \sum_{j=1}^{d\minu 1} \left( \ket{j0...0} + \ket{0j0...0} + ... + \ket{0...0j} \right).
    \end{equation}
\end{definition}

The gate set this work is set in is what we call the $\sqrt[d]{Z}$ (can also be denoted $Z^{(1/d)}$ or $dZ$) gate.  This is a family of gates, one for each prime qudit (i.e. qupit) dimension $d$, in the $d^{\text{th}}$ level of the Clifford hierarchy.  For $d = 3$ this coincides with the qutrit $T$ gate.  Indeed, as we show in Section~\ref{sec:cliffordplust}, this can be seen as an alternate generalization of the qutrit $T$ gate to all odd prime dimensions, to the characterization of single-qupit diagonal gates in the third level of the Clifford hierarchy by Howard and Vala~\cite{HowardM2012quditTgate}.

Despite the yet to be determined practicality of fault-tolerant implementation of the $\sqrt[d]{Z}$ gate, we claim that it is important to study it given that it enables certain logical constructions of computational importance.  A number of essential constructions in the qubit setting have no known implementation applicable in arbitrary prime dimension.  In this work we focus on constructing $W$ states considering the asymptotic gate count of a specific non-Clifford gate.  In forthcoming work we will demonstrate that this gate set enables exact synthesis of any classical reversible logic gate (including any qupit multiple-controlled Toffoli).  In prior work with van de Wetering, we presented circuit constructions in qutrit Clifford+$T$ for not only any ternary classical reversible gate, but also any multiple-controlled qutrit Clifford+$T$ gate~\cite{YehL2022controlunivqutrit}.

This can also facilitate future decompositions in other gate sets by establishing a pathway to realizing these computationally useful gates, through showing that direct implementation of the $\sqrt[d]{Z}$ gate is a sufficient condition.
We note that proving whether a gate is impossible to exactly synthesize in a certain gate set is difficult, and is oftentimes preceded by the generally difficult problem of full characterization of all gates exactly synthesizable in that gate set.

The paper is organized as follows.  In Section~\ref{sec:background}, we introduce the qupit gates present in this work and whether they are Pauli, Clifford, or higher in the Clifford hierarchy.
The main body of the paper is in Section~\ref{sec:controlledandW}.  This contains circuit decompositions of the $\ket{0}$-controlled $X$ gate and the $\ket{0}$-controlled $H$ gate in Sections~\ref{sec:ZCX}~and~\ref{sec:CH} respectively.  These are component gates in the circuit constructions of $d$-qubit $W$ states in Section~\ref{sec:Wdstate}, and $d^n$-qubit and $d^n$-qudit $W$ states in Section~\ref{sec:primepowerW} for $n \in \mathbb{Z}^+$.  We then discuss extending these protocols to any size $W$ states in Section~\ref{sec:othersizeW} by mixed dimensional computation and by post-selection, before brief comment on network topology in Section~\ref
{sec:network}.  We end with concluding remarks and future work in Section~\ref{sec:conclusion}.

\section{Background}\label{sec:background}
\subsection{Qupit Cliffords}
Several concepts for qubits extend to qutrits, or more generally to qu\emph{dits}, which are $d$-dimensional quantum systems --- in particular, the concept of Pauli's and Cliffords.
For a $d$-dimensional qudit, we follow the standard qudit definition of the Pauli $Z$ and $X$ gates, colloquially called the clock and shift operators, as~\cite{GottesmanD1999ftqudit,HowardM2012quditTgate}
\begin{equation}
    Z\ket{k} = \omega^k \ket{k} \qquad\qquad X\ket{k} = \ket{k+1}
\end{equation}
where $k \in \mathbb{Z}_d$.  Here $\omega \coloneqq e^{2\pi i/d}$ is such that $\omega^d = 1$, and the addition $\ket{k+1}$ is taken modulo $d$.  For this reason, in circuit notation we adopt the shorthand of $+1$ being the $X$ gate and $+k$ being the $X^k$ gate.  The \emph{Pauli group} is the set of unitaries generated by tensor products of the $X$ and $Z$ gate.

Another concept that readily generalizes to qudits is that of Clifford unitaries.
\begin{definition}\label{def:Clifford}
    Let $U$ be a qudit unitary acting on $n$ qudits. We say it is \emph{Clifford} when every Pauli is mapped to another Pauli under conjugation by $U$, i.e.~if $UPU^\dagger$ is in the Pauli group for any Pauli $P$.
\end{definition}
The set of $n$-qudit Cliffords forms a group under composition. For qubits, this group is generated by the $S$, $H$ and $CX$ gates. The same is true for qupits, for the right generalization of these gates\footnote{The gate definitions for various qudit Cliffords may vary across the literature up to a global phase.  Indeed, by Definition~\ref{def:Clifford}, whether a gate is Clifford is invariant under changes in global phase.}.

For qubits the $X$ gate is the NOT gate, while $Z=\text{diag}(1,-1)$. For a qubit there is only one non-trivial permutation of the standard basis states, implemented by the $X$ gate.  For qutrits, there are five ternary $X$ gates, as there are five non-trivial permutations of the $Z$-basis states: 2 of which are Pauli $X$ and $X^\dagger$, and the other 3 exchange exactly 2 of the 3 qutrit $Z$-basis states.  As the qupit dimension $d$ increases, a growing number of permutations of single-qupit $Z$-basis states fall outside the single-qupit Clifford unitaries.

Just as the $Z$-basis states labelled by elements of $\mathbb{Z}_d$ are eigenstates of the $Z$ operator, the $X$-basis states are eigenstates of the $X$ operator.  We denote the $\ket{+}$ state as the $X$-basis state with eigenvalue $1$,
\begin{equation}
    \ket{+} = \frac{1}{\sqrt{d}} \sum_{j=0}^{d\minu 1} \ket{j}.
\end{equation}

We now define the Clifford gate set, which is a set of three qupit gates, $\{S, H, CX\}$, that generates all Clifford unitaries~\cite{GottesmanD1999ftqudit}.
On $Z$-basis states $\ket{k}$, the qupit $S$ gate acts as
\begin{equation}
    \ket{k} \mapsto \omega^{k(k\minu 1)/2} \ket{k} ,
\end{equation}
the $H$ gate acts as
\begin{equation}
    \ket{k} \mapsto \frac{1}{\sqrt{d}} \sum_{j=0}^{d\minu 1} \omega^{jk} \ket{j} ,
\end{equation}
and the $CX$ gate performs addition modulo $d$, acting as
\begin{equation}
    \ket{j}\ket{k} \mapsto \ket{j}\ket{(j+k)\,\,\text{mod}\,\,d}.
\end{equation}

A more helpful view of the $H$ gate is that it sends each $X$-basis state to the $Z$-basis state with the same eigenvalue, hence it is also referred to as the discrete quantum Fourier transform.

To distinguish it from non-Clifford controlled $X$ gates, we adopt the shorthand of labelling the control of the (Clifford) $CX$ gate by $\Lambda$~\cite{BocharovA2017ternaryshor} and the target by $X$ or by $+1$.
\begin{definition}\label{def:controlled-gates}
    Given a qudit unitary $U$ we define
\begin{equation}
     \Lambda(U)\ket{c}\ket{t} = \ket{c} \otimes (U^c \ket{t}),
\end{equation}
i.e.~we apply the unitary $U$ a number of times equal to to the value of the control qudit.
\end{definition}
By repeating the $CX$ gate $k \in \mathbb{Z}_d$ times, the $\Lambda(+k) \coloneqq \Lambda(X^k)$ is Clifford.

\subsection{Diagonal gates in the Clifford hierarchy}
\label{sec:cliffordplust}
Not all non-Clifford gates possess the same viability in a fault-tolerant setting.  The Clifford hierarchy is an important concept in characterizing which gates admit magic state injection protocols or transversal implementation on stabilizer codes.

\begin{definition}
The $n^{\text{th}}$ level of the Clifford hierarchy $C_n$~\cite{GottesmanD1999ftqudit,PllahaT2020unweylingclifford,RengaswamyN2019UnifyingRings,CuiS2017Diagonalhierarchy} is defined inductively as $C_1 := \mathcal{P}$, the Pauli's and
\begin{equation}
    C_n := \{\text{unitary}~U~|~\forall P\in \mathcal{P}: UPU^\dagger \in C_{n\minu 1}\}.
\end{equation}
\end{definition}

The Pauli gates form $C_1$ and the Clifford gates form $C_2$, while certain non-Clifford gates belong to $C_n$ for higher $n$. 
The Clifford hierarchy is significant because of its relation to the correction of Pauli errors.
For instance, when implementing a gate via a `state injection' mechanism~\cite{BravyiS2005ftqc}, a gate from $C_i$ requires a correction from $C_{i\minu 1}$. 
Thus, implementing a gate from the third level of the hierarchy using state injection requires a Clifford correction. This operation is cheap in most fault-tolerant architectures.

Despite the third level of the Clifford hierarchy and higher not being groups (for instance they are not closed under composition), the Paulis, Cliffords, and the diagonal gates in the Clifford hierarchy are groups.  This facilitated characterization of all diagonal gates in the Clifford hierarchy for arbitrary prime qudit dimension $d$ by Cui, Gottesman, and Krishna~\cite{CuiS2017Diagonalhierarchy}.  In particular, they identified that gates of the form
\begin{equation}\label{eq:Uma}
U_{m,a} = \sum_{j=0}^{d\minu 1}\,\,\text{exp}\left(\frac{2 \pi i}{d^m} j^a \right) \ket{j} \bra{j}
\end{equation}
lie in the $\left((d\minu 1)(m\minu 1)+a\right)^{\text{th}}$ level of the Clifford hierarchy~\cite[Equation 16]{CuiS2017Diagonalhierarchy}.

All single-qudit operations in the third level of the Clifford hierarchy admit robust state injection protocols, as commutation with a Pauli gate requires a Clifford correction.  The most well-studied single-qubit non-Clifford gate, the $T$ gate $\text{diag}(1,\text{exp}(\frac{2\pi i}{8}))$, rests in the third level of the Clifford hierarchy, and admits fault-tolerant implementation, such as through magic state distillation and state injection, or through transversal implementation on certain stabilizer codes.  The $T$ gate has been generalized to prime dimensions by means of explicitly solving for the diagonal single-qudit gates in the third level of the Clifford hierarchy~\cite{HowardM2012quditTgate}.  This exact solution involved the multiplicative inverse of 12 modulo $d$, and thus yielded the same form for all prime dimensions with the exceptions of $d = 2$ and $d = 3$.

The qutrit $T$ gate is defined as $\text{diag}(1, \omega^{1/3}, \omega^{\minu 1/3})$~\cite{HowardM2012quditTgate}, or a diagonal gate which is Clifford equivalent~\cite{BocharovA2017ternaryshor}.  Like the qubit $T$ gate, the qutrit $T$ gate belongs to the third level of the Clifford hierarchy, can be injected into a circuit using magic states, and its magic states can be distilled by magic state distillation~\cite{CampbellE2012tgatedistillation}. This means that we can fault-tolerantly implement this gate on many qutrit error correcting codes.
Also, as is for qubits, the qutrit Clifford+$T$ gate set is approximately universal, meaning that we can approximate any qutrit unitary using just Clifford gates and the $T$ gate~\cite[Theorem 1]{CuiS2015universalmetaplectic}.  More generally, adding a single-qupit non-Clifford gate to the Clifford gate set yields approximately universal quantum computation.

The $T$ gate is special for $d = 2$ and for $d = 3$.  For all other prime dimensions, exactly solving for the single-qudit diagonal gates in the third level of the Clifford hierarchy yields a family of gates all satisfying the same explicit expression~\cite{HowardM2012quditTgate}, of which a sensible choice is $U_{1,3}$ in Equation~\eqref{eq:Uma}~\cite{KrishnaA2019lowoverheadmsd}.

In this text we investigate a different class of gates, by a different generalization of the qutrit $T$ gate to all odd prime dimensions.  The primary reason is to explore non-Clifford gates which enable potentially fault-tolerant synthesis of important logical gates and resource states.  We apply these gates to constructing qubit and qupit generalized $W$ states, for which, for most sizes of the $W$ state, no construction which is both deterministic and potentially fault-tolerant was previously known.

We now introduce the sole single-qudit non-Clifford gate we will need for our constructions.
To the best of our knowledge, this gate has not been studied in the context of gate synthesis before for $d > 3$.
Let $\zeta \coloneqq e^{\frac{2\pi i}{d^2}} = \sqrt[d]{\omega}$ denote the ${d^2}^{\text{th}}$ root of unity.  We denote $\sqrt[d]{Z} = U_{2,1}$ in Equation~\eqref{eq:Uma} to be the gate acting as
\begin{equation}
\ket{k} \mapsto \zeta^k \ket{k}
\end{equation}
for $k$ in $\mathbb{Z}_d$.
Indeed $(\sqrt[d]{Z} \,)^d = Z$, justifying it being a $d^{\text{th}}$ root of $Z$ gate, albeit not the only gate to satisfy this condition.

According to Equation~\eqref{eq:Uma}, the $\sqrt[d]{Z}$ gate is in the $d^{\text{th}}$ level of the Clifford hierarchy.  This implies that for qubits it is the Clifford $S = \text{diag}(1, i)$ gate.  For qutrits it is $\text{diag}(1, \omega^{1/3}, \omega^{2/3}) = U_{2,1}$ in Equation~\eqref{eq:Uma}, joining the other definitions of the qutrit $T$ gate in the literature which are all Clifford equivalent to one another.

For all other $d$, this gate is above the third level of the Clifford hierarchy.  This means that whether is can be efficiently implemented fault-tolerantly is an open question.  It was conjectured by de Silva informed by numerical evidence that single-qupit gates in any level of the Clifford hierarchy can be implemented efficiently~\cite{deSilvaN2021qupitgateteleportation}; to date, this has been proven to be the case for every two-qubit gate in any level of the Clifford hierarchy~\cite{ZengB2008semicliffordtwoqubit}, for every third level gate of two qutrits~\cite{deSilvaN2021qupitgateteleportation}, and for every third level gate of one qudit (of any prime dimension)~\cite{deSilvaN2021qupitgateteleportation}.

Another gate we will refer to is the single-qudit gate $P_1(k)$:
\begin{definition}~\label{def:Pmk}
    In the $(d\minu 1)^{\text{th}}$ level of the Clifford hierarchy~\cite{CuiS2017Diagonalhierarchy}:
\begin{equation}
    P_1(k) \coloneqq \sum_{j=0,\\ j \neq k}^{d\minu 1} \ket{j}\bra{j} + \,\,\omega \ket{k}\bra{k}
\end{equation}
\end{definition}

\section{Controlled gates and \texorpdfstring{$W$}{W} states}\label{sec:controlledandW}
In addition to the $CX$ gate, we will need two more controlled gates in this text: the $\ket{0}$-controlled $X$ gate and the controlled $H$ gate.

\subsection{The \texorpdfstring{$\ket{0}$}{0}-controlled \texorpdfstring{$X$}{X} gate}\label{sec:ZCX}
The $\ket{0}$-controlled $X$ gate acts on the target as
\begin{equation}
    \begin{cases}
        X \ &\text{if control is}\ \ket{0} \\
        I \ &\text{if control is}\ \ket{k}, \ k \neq 0 \in \mathbb{Z}_d
    \end{cases}
\end{equation}
Notably, it is non-Clifford except in the qubit case.  We present its construction below.  Before we do so, we define what is a \emph{controlled global phase}~\cite{YehL2022controlunivqutrit}, before presenting an instance of a circuit we will soon apply to correct a specific controlled global phase.

A complication when trying to construct controlled unitaries, is that usually irrelevant global phases becomes `local' and hence must be dealt with accordingly~\cite[Lemma 5.2]{BarencoA1995elementarygates}.
\begin{definition}\label{def:controlledglobalphasegate}
     A \emph{controlled global phase gate} is a controlled unitary where the unitary is $e^{i \phi} \mathbb{I}$, for identity matrix $I$ and phase $\phi$.
 \end{definition}
The number of qudits the identity matrix acts on in this definition is irrelevant as the phase factor can be ``factored out'' from the tensor product of the control and target qudits:
 \begin{equation}
     \label{eq:controlledglobalphase}
     \tikzfig{controlledglobalphase}
 \end{equation}
Here we substituted the global phase by $\gamma$ where $\phi = \gamma \cdot \frac{2\pi}{d}$ so that the phases are powers of $\omega$.

Our circuit decomposition of the $\ket{0}$-controlled $X$ gate will be off by a certain controlled global phase, which we later calculate in Equation~\eqref{eq:calccgp} to be $\omega^{(d\minu 1)/2}$.
If it is desired to correct this controlled global phase, we want to implement a diagonal gate $CGP$ which (up to a global phase we can ignore) acts as
\begin{equation}
    \ket{0} \mapsto \omega^{(d\minu 1)/2} \ket{0}, \\
    \ket{k} \mapsto \ket{k}, \ k \neq 0 \in \mathbb{Z}_d.
\end{equation}

\begin{lemma}\label{lemma:cgp}
    For any odd prime qudit dimension $d$, a two-qudit circuit can be constructed to correct a controlled global phase of $\omega^{(d\minu 1)/2}$ on the target when the control is $\ket{0}$, with $(d\minu 1)$ $\sqrt[d]{Z}$ gates.
\end{lemma}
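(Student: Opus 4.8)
The plan is to realize the required correction as a diagonal phase gate on the control line, built entirely from $d-1$ copies of $\sqrt[d]{Z}$ together with free Clifford gates, and to let the number theory of the prime field $\mathbb{Z}_d$ do the real work. First I would pin down exactly which gate is needed. By the factoring-out identity of Equation~\eqref{eq:controlledglobalphase}, correcting a controlled global phase of $\omega^{(d-1)/2}$ when the control is $\ket{0}$ amounts to implementing $\mathrm{diag}(\omega^{-(d-1)/2},1,\dots,1)$ on the control while acting trivially on the target, up to an overall global phase that we may discard. Writing this in terms of the $d^2$-th root of unity $\zeta$ with $\zeta^{d}=\omega$, the goal is to add a phase $\zeta^{e(c)}$ to each control value $c\in\mathbb{Z}_d$, where $e$ is constant on the nonzero values and $e(0)$ differs from that constant by $\tfrac{d(d-1)}{2}$ modulo $d^2$.

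The building block I would use for each nonzero $b\in\mathbb{Z}_d$ is the conjugate of $\sqrt[d]{Z}$ by the ``multiplier'' $M_b:\ket{k}\mapsto\ket{bk \bmod d}$. Since $M_b$ sends $X\mapsto X^{b}$ and $Z\mapsto Z^{b^{-1}}$ under conjugation, it is Clifford by Definition~\ref{def:Clifford}, and it can be compiled onto the two qudit lines using $CX$ gates; this is the sense in which the circuit is two-qudit, and crucially the multiplier costs nothing in the $\sqrt[d]{Z}$ count. The gate $M_b^{-1}\,\sqrt[d]{Z}\,M_b$ is then diagonal on the control and adds precisely the phase $\zeta^{[bc]_d}$, where $[\,\cdot\,]_d$ denotes the residue taken in $\{0,\dots,d-1\}$. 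Taking the product over all $b\in\{1,\dots,d-1\}$ therefore uses exactly $d-1$ copies of $\sqrt[d]{Z}$ and adds to $\ket{c}$ the total phase $\zeta^{\sum_{b=1}^{d-1}[bc]_d}$.

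The heart of the argument is the evaluation of this exponent, which is where primality of $d$ is essential. For $c=0$ every summand vanishes, so $\ket{0}$ acquires no phase. For $c\neq 0$, multiplication by $c$ permutes the nonzero residues of the field $\mathbb{Z}_d$, so $b\mapsto bc \bmod d$ is a bijection of $\{1,\dots,d-1\}$ and $\sum_{b=1}^{d-1}[bc]_d=\sum_{s=1}^{d-1}s=\tfrac{d(d-1)}{2}$, independent of $c$. Hence every nonzero control value picks up the same phase $\zeta^{d(d-1)/2}=\omega^{(d-1)/2}$ while $\ket{0}$ picks up none; up to the global phase $\omega^{-(d-1)/2}$ this is exactly the desired $\mathrm{diag}(\omega^{-(d-1)/2},1,\dots,1)$, with the target left undisturbed.

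I expect the main difficulty to lie in the bookkeeping rather than in this core identity. One must verify that each conjugated block contributes a phase depending only on the control value and leaves no residual dependence on the target (a naive $CX$-based coupling instead adds $\zeta^{[t+bc]_d}$, whose $c=0$ component $\zeta^{(d-1)t}$ is genuinely target-dependent modulo $d$ and cannot be removed by Cliffords, which is precisely why the multiplier conjugation is the right tool). One must also fix the orientation so that the sign of the exponent yields the correction rather than its inverse $CGP$, and confirm that all multiplier conjugations compile into the Clifford layer without spending any further $\sqrt[d]{Z}$ gates, so that the count is exactly $d-1$.
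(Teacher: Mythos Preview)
Your argument is correct and reaches the stated $\sqrt[d]{Z}$ count, but it is not the paper's route. The paper builds the correction directly from the single-qudit gate $P_1(0)=\mathrm{diag}(\omega,1,\dots,1)$ of Definition~\ref{def:Pmk}, observing by an explicit one-line matrix computation that $\sqrt[d]{Z}^{\,\dagger}\,X\,\sqrt[d]{Z}\,X^{\dagger}=\zeta^{-1}P_1(0)$, and then applies $P_1(0)$ exactly $(d-1)/2$ times to obtain $\mathrm{diag}(\omega^{(d-1)/2},1,\dots,1)$; each $P_1(0)$ costs two $\sqrt[d]{Z}$ gates (one and its dagger), giving $d-1$ in total. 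Your approach instead conjugates a single $\sqrt[d]{Z}$ by each multiplier Clifford $M_b$, $b\in\mathbb{Z}_d^{\times}$, and uses the bijectivity of $b\mapsto bc$ on $\mathbb{Z}_d^{\times}$ to force every nonzero control value to pick up the common exponent $\sum_{s=1}^{d-1}s$. The paper's method is lighter on the Clifford side---only Pauli $X$ conjugations appear---and makes the ``$(d-1)/2$ copies of a phase-$\omega$ gate'' structure explicit; yours makes the role of the prime-field structure transparent and never mixes $\sqrt[d]{Z}$ with its dagger. One unnecessary detour in your write-up: each $M_b$ is already a \emph{single}-qudit Clifford (its symplectic action $(z,x)\mapsto(b^{-1}z,bx)$ has determinant $1$, hence lies in $\langle S,H\rangle$), so there is no need to invoke $CX$ or a second line; the phrase ``two-qudit'' in the lemma refers only to the ambient $\ket{0}$-controlled $X$ circuit in which this correction sits.
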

\begin{proof}
    We implement this by applying $(d\minu 1)/2$ times the gate $P_1(k)$ from Definition~\ref{def:Pmk} where $k = 0$.  Up to a global phase of $\zeta$, for $k \in \mathbb{Z}_d$, the $P_1(k)$ gates are also exactly synthesizable in the Clifford+$\sqrt[d]{Z}$ gate set by conjugating $\sqrt[d]{Z}$ by Pauli $X$:
\begin{equation}\label{eq:Pk}
    P_1(k) = \zeta \,\, X^k \, {\sqrt[d]{Z}}^{\dagger} \, X \, \sqrt[d]{Z} \, {X^{\dagger}}^{(k+1)}
\end{equation}
This follows from direct computation for the $k = 0$ case, as $X \, \sqrt[d]{Z} \, X^{\dagger} = \text{diag}(\zeta^{d\minu 1},1,\zeta,\zeta^2,...,\zeta^{d\minu 2})$, and ${\sqrt[d]{Z}}^{\dagger} = \text{diag}(1,\zeta^{\minu 1},\zeta^{\minu 2},...,\zeta^{\minu (d\minu 1)})$. For all other $k \in \mathbb{Z}_d$, conjugate by $X$ $k$ times.
\end{proof}
\begin{remark}
    As $P_1(0)$ is one level lower in the Clifford hierarchy than $\sqrt[d]{Z}$, it could be sensible to add it to the Clifford+$\sqrt[d]{Z}$ gate set, instead of decomposing it into $\sqrt[d]{Z}$ gates.
\end{remark}

The following proposition essentially generalizes the construction for the $\ket{2}$-controlled qutrit $Z$ gate by Bocharov, Cui, Roetteler, and Svore~\cite{BocharovA2016ternaryarithmetics} to arbitrary prime qudit dimension.
\begin{proposition}\label{prop:zcx}
The $\ket{0}$-controlled $X$ gate can be constructed with $\sqrt[d]{Z}$-count $d$ (up to a controlled global phase that is a power of $\omega$), or with $\sqrt[d]{Z}$-count $2d\minu 1$ (exact implementation).
\end{proposition}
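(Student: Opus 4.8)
The plan is to reduce the $\ket{0}$-controlled $X$ gate to a diagonal two-qudit gate and then synthesize that diagonal gate purely from $\sqrt[d]{Z}$ phases. Since $H^\dagger Z H = X$ for the qupit Fourier transform $H$ (a short computation from the definition $H\ket{k}=\tfrac{1}{\sqrt{d}}\sum_j\omega^{jk}\ket{j}$), placing $H$ before and $H^\dagger$ after on the target converts a $\ket{0}$-controlled $Z$ gate into the $\ket{0}$-controlled $X$ gate. These two $H$ gates are Clifford and so do not contribute to the $\sqrt[d]{Z}$-count. Hence it suffices to build the diagonal gate $\ket{j}\ket{k}\mapsto\omega^{k\delta_{j,0}}\ket{j}\ket{k}$.

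The key mechanism for the diagonal gate is that $\sqrt[d]{Z}$ resolves phases modulo $d^2$, so applying it to a register that has been reduced modulo $d$ converts wrap-arounds into powers of $\omega=\zeta^d$ --- exactly the nonlinearity needed to single out $j=0$. Concretely, for each $a\in\{0,\dots,d\minu 1\}$ I would conjugate a single $\sqrt[d]{Z}$ on the target by $\Lambda(+a)$: apply $\Lambda(+a)$, then $\sqrt[d]{Z}$ to the target, then $\Lambda(-a)$. Because $\Lambda(+a)$ sends the target $\ket{k}$ to $\ket{(k+aj)\bmod d}$ and $\Lambda(-a)$ restores it, this gadget contributes exactly the phase $\zeta^{(k+aj)\bmod d}$ and leaves the basis state unchanged. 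Running over all $a$ spends exactly $d$ copies of $\sqrt[d]{Z}$ and produces the total phase $\zeta^{S(j,k)}$ with $S(j,k)=\sum_{a=0}^{d\minu 1}\big((k+aj)\bmod d\big)$.

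The next step is the phase bookkeeping modulo $d^2$. When $j=0$, every summand equals $k$, so $S(0,k)=dk$ and $\zeta^{dk}=\omega^{k}$, which is the desired $Z$ action on the target. When $j\neq0$, primality of $d$ makes $a\mapsto(k+aj)\bmod d$ a bijection of $\mathbb{Z}_d$, so $S(j,k)=0+1+\dots+(d\minu 1)=d(d\minu 1)/2$ independently of $k$; since $d$ is odd this gives $\zeta^{S(j,k)}=\omega^{(d\minu 1)/2}$ for every $j\neq0$. Thus the construction realizes the target diagonal gate up to an overall global phase $\omega^{(d\minu 1)/2}$ times a controlled global phase $\omega^{\minu(d\minu 1)/2}$ supported on $\ket{0}$ of the control --- precisely the controlled global phase recorded in Equation~\eqref{eq:calccgp}. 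This establishes the $\sqrt[d]{Z}$-count $d$ construction up to a controlled global phase that is a power of $\omega$. Appending the correction circuit of Lemma~\ref{lemma:cgp}, which implements the compensating controlled-$\ket{0}$ phase with $(d\minu 1)$ further $\sqrt[d]{Z}$ gates, yields the exact gate with count $2d\minu 1$.

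The main obstacle is the modular bookkeeping itself: I must confirm that the only surviving deviation from the target diagonal is this single controlled global phase on $\ket{0}$, tracking every contribution modulo $d^2$ and checking that the linear-in-$k$ part, the linear-in-$j$ part, and the per-gadget wrap-arounds recombine into exactly $S(j,k)$ and nothing more. The bijection argument for $j\neq0$ --- yielding the value $d(d\minu 1)/2$ and its reduction to $\omega^{\pm(d\minu 1)/2}$ using that $d$ is odd --- is the crux, and it is also the one place where the sign conventions of $H$ and $\Lambda$ could flip the direction of the leftover phase, so that direction must be matched against the sign appearing in Lemma~\ref{lemma:cgp}.
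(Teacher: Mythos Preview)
Your argument is correct and matches the paper's proof essentially step for step: both reduce to the diagonal $\ket{0}$-controlled $Z$ via Hadamard conjugation on the target, accumulate the phase $\zeta^{\sum_a (k+aj)\bmod d}$ from $d$ copies of $\sqrt[d]{Z}$, split into the $j=0$ case (giving $\omega^k$) and the $j\neq 0$ bijection case (giving the residual $\omega^{(d-1)/2}$), and then invoke Lemma~\ref{lemma:cgp} for the exact version. The only cosmetic difference is that the paper iterates the block $(CX,\sqrt[d]{Z})$ $d$ times rather than conjugating by $\Lambda(+a)$ for each $a$ separately; after telescoping $\Lambda(-a)\Lambda(+(a{+}1))=CX$ your circuit becomes theirs, so the phase bookkeeping and counts are identical.
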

\begin{proof}
\begin{equation}\label{eq:ZCX}
    \tikzfig{ZCX}
\end{equation}
To derive that the above circuit implements the $\ket{0}$-controlled $X$ gate, start by considering only the action of the subcircuit which is repeated $d$ times.  Observe that when the control and target are $Z$-basis states $\ket{x}$ and $\ket{y}$ for $x, y \in \mathbb{Z}_d$, at the time slice after the first $CX$ gate the state is $\ket{x, (x+y)\,\,\text{mod}\,\,d}$.  The $\sqrt[d]{Z}$ gate on the target then contributes a factor of $\zeta^{(x + y)\,\,\text{mod}\,\,d}$.  The $\sqrt[d]{Z}$ gate after that contributes a factor of $\zeta^{(2x + y)\,\,\text{mod}\,\,d}$, and so on.  Repetition of the $CX$ and the $\sqrt[d]{Z}$ gate $d$ times computes to the diagonal gate
\begin{equation}\label{eq:zcx}
\ket{x, y} \mapsto \prod_{j=0}^{d\minu 1} \zeta^{(jx + y)\,\,\text{mod}\,\,d} \ket{x, y}.
\end{equation}

This can be broken down into two cases: $x= 0$ and $x \neq 0$.
When $x = 0$, the output of Equation~\eqref{eq:zcx} is
\begin{equation}
\zeta^{dy} \ket{x, y}.
\end{equation}
As $\zeta^d = \omega$, this is simply the $Z$ gate on the target qupit (except for qubits where this is the $S$ gate).

When $x \neq 0$, the output is
\begin{equation}
\zeta^{\sum_{j=0}^{d\minu 1} (jx + y)\,\,\text{mod}\,\,d} \ket{x, y}.
\end{equation}
Note that in the sum, $(jx + y)\,\,\text{mod}\,\,d$ takes on the value of each element in $\mathbb{Z}_d$ exactly once.  Therefore, when $x \neq 0$, the output is
\begin{equation}\label{eq:calccgp}
\zeta^{\sum_{j=0}^{d\minu 1} j} \ket{x, y} = \zeta^{d(d\minu 1)/2} \ket{x, y}.
\end{equation}
As $\zeta^d = \omega$, this amounts to a controlled global phase of $\omega^{(d\minu 1)/2}$ which applies iff $x \neq 0$, labelled $CGP$ in Equation~\eqref{eq:ZCX}.  This can be either corrected by applying Lemma~\ref{lemma:cgp}, or omitted in situations where this phase discrepancy is tolerated to improve efficiency. 

Finally, we conjugate the target qudit by $H$ to obtain the $\ket{0}$-controlled $X$ gate (up to a global phase that we can safely ignore).
\end{proof}
For $d=3$ $CGP$ is Clifford, so the $\ket{0}$-controlled $X$ gate has $T$-count 3~\cite{BocharovA2016ternaryarithmetics}.
For $d=2$, while $(d\minu 1)/2$ is not an integer, the construction remains valid by substituting $CGP = Z$.

\begin{corollary}
For $k, p \in \mathbb{Z}_d$, conjugating the control qudit by $X^k$ yields the $\ket{k}$-controlled $X$ gate.  From this, the $\ket{k}$-controlled $X^p$ gate can be built na\"ively by repeating the construction $p$ times.
\end{corollary}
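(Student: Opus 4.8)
The plan is to verify both claims directly on computational basis states, since each reduces to tracking how the control's activation value is relabelled by Pauli conjugation. Write $C_0$ for the $\ket{0}$-controlled $X$ gate of Proposition~\ref{prop:zcx}, whose action I record as $C_0\ket{c}\ket{t} = \ket{c}\otimes X^{[c=0]}\ket{t}$, the exponent $[c=0]$ being $1$ when $c=0$ and $0$ otherwise. To obtain the $\ket{k}$-controlled $X$ gate I would form the conjugate $(X^{k}\otimes I)\,C_0\,(X^{-k}\otimes I)$ acting on the control register and evaluate it on an arbitrary basis vector $\ket{c}\ket{t}$. The rightmost factor sends $\ket{c}\mapsto\ket{c-k}$, so $C_0$ fires precisely when $c-k\equiv 0$, i.e. when $c\equiv k$, and the trailing $X^{k}$ restores the control to $\ket{c}$ while leaving the target untouched. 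The net action is $\ket{c}\otimes X^{[c=k]}\ket{t}$, which is exactly the $\ket{k}$-controlled $X$ gate.

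For the $\ket{k}$-controlled $X^{p}$ gate I would iterate this construction $p$ times in series. On the active branch $c=k$ each copy contributes one power of $X$ to the target, so the composite applies $X^{p}$; on every inactive branch $c\neq k$ each copy is the identity, and identities compose to the identity. Hence the $p$-fold repetition realizes $\ket{c}\otimes X^{p\,[c=k]}\ket{t}$, as required, at $p$ times the $\sqrt[d]{Z}$-count of the single $\ket{k}$-controlled $X$ gate.

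I do not expect a genuine obstacle here; the only point demanding care is the direction of the conjugation together with the modular arithmetic, namely that one must pre-apply $X^{-k}$ — equivalently conjugate by $X^{k}$ in the usual $U(\cdot)U^{\dagger}$ sense — so that the activation condition migrates from $\ket{0}$ to $\ket{k}$ rather than to $\ket{d-k}$. Because $X^{k}$ is Clifford, the conjugation contributes no additional non-Clifford cost, and the $X^{p}$ step merely multiplies the overall gate count by $p$.
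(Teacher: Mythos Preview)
Your argument is correct and is exactly the direct computation the paper has in mind; the paper in fact states this corollary without proof, so your verification on basis states together with the observation that Pauli conjugation is Clifford (hence cost-free) is precisely the intended justification.
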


\subsection{The \texorpdfstring{$d$}{d}-qubit \texorpdfstring{$W$}{W} state}\label{sec:Wdstate}
We now introduce our first $W$ state construction.
\begin{proposition}\label{prop:dqubitWstate}
For any prime qudit dimension $d$, the $N=d$-qubit $W$ state can be constructed deterministically in the Clifford+$\sqrt[d]{Z}$ gate set using $d^2 \minu d$ $\sqrt[d]{Z}$ gates.
\end{proposition}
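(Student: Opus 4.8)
The plan is to build the $d$-qubit $W$ state by preparing a single qudit in a superposition that encodes the "location" of the excitation, then distributing that excitation across $d$ qubits. The key observation is that the $d$-qubit $W$ state lives in the single-excitation subspace: it is the uniform superposition over the $d$ computational basis states with exactly one $\ket{1}$. Since we have a $d$-dimensional qudit available, I would first encode this index into one qudit, mapping $\ket{0} \mapsto \frac{1}{\sqrt{d}}\sum_{j=0}^{d-1}\ket{j}$ via the Hadamard $H$. This costs no $\sqrt[d]{Z}$ gates. The qudit value $j \in \mathbb{Z}_d$ now labels which of the $d$ output qubits should carry the excitation.

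The core of the construction is a cascade that "unrolls" the qudit index $j$ onto $d$ qubit wires, placing a $\ket{1}$ on exactly the $j$-th qubit. I would proceed sequentially: for each qubit wire $m = 0, 1, \ldots, d-1$, I want to set it to $\ket{1}$ precisely when the control qudit currently equals $0$ (after suitable shifts), then decrement the qudit. Concretely, using the $\ket{0}$-controlled $X$ gate from Proposition~\ref{prop:zcx}, I apply a $\ket{0}$-controlled $X$ from the qudit onto qubit $m$, followed by an $X^{\dagger}$ (i.e.\ $-1$) on the qudit itself; after $m$ such decrements the original value $j$ has been shifted to $j - m$, so the $\ket{0}$-control fires exactly when $j = m$. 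This deterministically routes the single excitation to the correct qubit, leaving the qudit in $\ket{-j} = \ket{(d-j)\bmod d}$, which I can uncompute or simply absorb since the qubits already hold the entangled state. Each of the $d$ qubit wires consumes one $\ket{0}$-controlled $X$ gate, and each such gate costs $d$ many $\sqrt[d]{Z}$ gates up to a controlled global phase, giving the headline count.

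The counting is where I must be careful to match the claimed $d^2 - d$ exactly. Naively $d$ applications of a $\sqrt[d]{Z}$-count-$d$ gate would give $d^2$, so I expect the savings of $d$ to come from two sources: first, the \emph{last} qubit in the cascade needs no control correction because by then the excitation placement is forced (the qudit must equal the remaining value), letting me drop one controlled gate's worth; and second, the accumulated controlled global phases from the $\ket{0}$-controlled $X$ gates (each a power of $\omega$ by Proposition~\ref{prop:zcx}) should telescope into either the identity or a phase that can be discarded as an irrelevant \emph{global} phase on the final $W$ state, so that I use the cheaper $\sqrt[d]{Z}$-count-$d$ version throughout rather than the exact count-$(2d-1)$ version. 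I would verify that the residual qudit state and the leftover phase are both harmless for a state-preparation (rather than unitary-synthesis) task.

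\textbf{The main obstacle} will be tracking the interaction between the decrement shifts and the controlled global phases. Because the $\ket{0}$-controlled $X$ gate of Proposition~\ref{prop:zcx} carries a controlled phase of $\omega^{(d-1)/2}$ conditioned on the control being nonzero, and the control qudit is itself in superposition over all of $\mathbb{Z}_d$, these phases attach differently to different branches of the superposition. I must confirm that, summed over the cascade and the shifting control values, they collapse to a single global phase independent of $j$; otherwise the output would be a phase-distorted $W$ state rather than the genuine $\ket{W_d}$, and I would be forced into the exact (and more expensive) version of the controlled gate, breaking the $d^2 - d$ count. Checking this phase-coherence across branches is the crux of establishing that the construction is both deterministic and phase-exact.
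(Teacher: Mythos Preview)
Your high-level strategy—encode the excitation index in a qudit via $H$, then distribute it with $\ket{0}$-controlled $X$ gates—is the same as the paper's. But two of your three ``careful counting'' claims do not hold, and the one you flag as the main obstacle is, ironically, the one that actually works.

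\textbf{The gate-saving argument is a classical-reasoning fallacy.} You say the last qubit ``needs no control correction because by then the excitation placement is forced.'' Classically, yes: after $d-1$ rounds you could deduce the remaining slot. But the control qudit is in superposition, and after $d-1$ decrements it sits in $\ket{j+1\bmod d}$, which is $\ket{0}$ on the branch $j=d-1$ and nonzero otherwise. Writing a $\ket{1}$ onto qubit $d-1$ exactly on that branch is precisely a $\ket{0}$-controlled $X$; there is no Clifford substitute. So your cascade uses $d$ controlled gates at $\sqrt[d]{Z}$-cost $d$ each, i.e.\ $d^2$, not $d^2-d$. Separately, your control qudit is an extra $(d{+}1)$-th wire that ends in $\ket{j}$, entangled with the outputs; ``simply absorb'' does not give a pure $W$ state. (It can be disentangled with Clifford $CX^\dagger$ ladders from the outputs, but you should say so.)

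\textbf{How the paper gets $d^2-d$.} The missing idea is to let the $\ket{+}$ live on one of the $d$ output wires rather than on a separate ancilla. Then the branch $j=1$ is already $\ket{0}\ket{1}\ket{0}\cdots\ket{0}$ and needs no non-Clifford gate at all. For each of the remaining $d-1$ values $j\in\{0,2,\ldots,d-1\}$ one applies a single $\ket{j}$-controlled $X$ to the appropriate target, followed by $j$ Clifford $CX^\dagger$ gates to reset the superposition wire back to $\ket{0}$. That is $d-1$ non-Clifford controlled gates at cost $d$ each, giving exactly $d(d-1)$. The accumulated controlled global phases are \emph{not} uniform here—the $j=1$ branch sees one more non-firing gate than the others—but the discrepancy is exactly a phase of $\omega^{(d-1)/2}$ attached to the $\ket{1}$ state of the superposition wire, and is removed by the Clifford correction $Z^{-(d-1)/2}$ on that wire at the end.

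\textbf{On your phase worry.} In your $d$-gate version, every branch $j$ sees exactly $d-1$ non-firing controlled gates, so each branch picks up the identical phase $\omega^{(d-1)^2/2}$; it really is global and needs no correction. So the phase coherence you feared is fine—it is the gate count, not the phases, where the proposal falls short of $d^2-d$.
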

\begin{proof}

\begin{equation}\label{eq:Wcirc}
    \tikzfig{Wcirc}
\end{equation}
Correctness of the circuit is straightforward to verify by observing that the $d$ $Z$-basis states which make up the $\ket{+}$ state in equal superposition each contribute to one of the $d$ terms of the output $W$ state.
Consider the unitary circuit being applied to the state
\begin{equation}
\ket{0}\ket{+}\ket{0}^{\otimes (d\minu 2)} = \frac{1}{\sqrt{d}} \sum_{j=0}^{d\minu 1} \ket{0}\ket{j}\ket{0}^{\otimes (d\minu 2)}.
\end{equation}
Observe that each subcircuit consisting of the $\ket{k}$-controlled $X$ gate followed by $k$ ${CX}^{\dagger}$ gates, for $k \geq 2 \in \mathbb{Z}_d$, acts as
\begin{equation}
\ket{0}\ket{k}\ket{0}^{\otimes (d\minu 2)} \mapsto \ket{0}^{\otimes k} \ket{1} \ket{0}^{\otimes (d\minu k\minu 1)}
\end{equation}
while being identity for all $j \neq k$ on $\ket{0}\ket{j}\ket{0}^{\otimes (d\minu 2)}$ and on\linebreak $\ket{0}^{\otimes j} \ket{1} \ket{0}^{\otimes (d\minu j\minu 1)}$.  For $d(d\minu 1)$ $\sqrt[d]{Z}$ count, delay all controlled global phase corrections until applying $Z^{\dagger}$ $\frac{d\minu 1}{2}$ times at the end.
\end{proof}

\begin{corollary}
As the qutrit $CGP$ is Clifford, the three-qubit $W$ state can be implemented with qutrit $T$-count 3:
\begin{equation}
    \tikzfig{W3}
\end{equation}
\end{corollary}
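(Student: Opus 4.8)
The plan is to specialize the construction of Proposition~\ref{prop:dqubitWstate} to $d=3$, exhibit the displayed three-$T$ circuit, and verify it by tracking basis states. Instantiated at $d=3$, the general recipe distributes the three components of $\ket{0}\ket{+}\ket{0} = \frac{1}{\sqrt{3}}\left( \ket{000} + \ket{010} + \ket{020} \right)$ using the two controlled-$X$ gadgets for the branches $j=0$ and $j=2$ (the branch $j=1$ already sits at the $W$-state term $\ket{010}$ and needs no action). Read off naively this would cost $d^2 - d = 6$ gates $\sqrt[3]{Z} = T$, so the two things to establish are that only three are needed and that the output is $\ket{W_3}$.

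I would obtain the saving from two facts special to $d=3$. First, as the statement notes, the correction $CGP = \omega^{(d-1)/2} = \omega$ is Clifford: correcting it is a single $P_1(0)$ from Definition~\ref{def:Pmk}, which equals $Z^2 S$ up to a global phase. Hence the $j=0$ gadget is exactly the $\ket{0}$-controlled $X$ of Proposition~\ref{prop:zcx} at cost $3$ $T$ gates, its residual controlled phase (on the subspace where the control is not $\ket{0}$) being removed by a Clifford $P_1(0)$. Second, the $j=2$ branch need \emph{not} use a second (non-Clifford) controlled-$X$. After the $j=0$ gadget the state is $\frac{1}{\sqrt{3}}\left( \ket{100} + \ket{010} + \ket{020} \right)$, and all that remains is the basis permutation fixing $\ket{100}$ and $\ket{010}$ and sending $\ket{020} \mapsto \ket{001}$. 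Viewing computational states as vectors over $\mathbb{Z}_3$, these three sources and their targets are each affinely independent, so the permutation extends to an invertible affine map $x \mapsto Ax + b$ on $\mathbb{Z}_3^3$, realized by a Clifford circuit of $CX$ and Pauli gates (with a unit-multiplication gate if needed). Thus the only $T$ gates in the whole circuit are the three inside the single $\ket{0}$-controlled $X$.

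Correctness I would then check directly: the $\ket{0}$-controlled $X$ sends the $j=0$ component to $\ket{100}$ and acts as identity on the $j=1,2$ components up to the controlled global phase, which the Clifford $P_1(0)$ equalizes across all three terms; the affine Clifford fixes the $j=0,1$ terms and routes $j=2$ to $\ket{001}$, giving $\ket{W_3}$ up to an ignorable global phase. Counting the non-Clifford gates yields $T$-count $3$.

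The crux is the second ingredient, and its availability is exactly what is special about $d=3$: the branches requiring routing sit at the collinear points $\ket{010}, \ket{020}, \dots$ (integer multiples of the same vector), and for $d \geq 4$ an affine map cannot send three or more collinear points to affinely-independent standard basis vectors, which is why the general construction spends one non-Clifford controlled-$X$ per branch. For $d=3$ there is only a single routed branch, so the affine map closes up and the non-Clifford cost collapses to one gadget; confirming that the displayed Clifford layer implements this routing together with the $CGP$ correction is the substance of the proof, after which the gate count is immediate.
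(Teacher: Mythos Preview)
Your argument is correct. The paper states the corollary with only the displayed circuit and the one-line justification ``As the qutrit $CGP$ is Clifford,'' so you have supplied the detail the paper omits. Your two-step reduction---one $\ket{0}$-controlled $X$ at $T$-cost $3$ (using that $P_1(0)=\omega Z^2S$ is Clifford when $d=3$), followed by a Clifford affine permutation routing $\ket{020}\mapsto\ket{001}$ while fixing $\ket{100},\ket{010}$---is valid, and your check that the three sources are affinely independent so an invertible affine extension exists is sound.

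One comparison worth noting: the paper's general circuit in Proposition~\ref{prop:dqubitWstate} is organised around the $k\geq 2$ gadgets, so the displayed $W_3$ circuit most plausibly keeps the $\ket{2}$-controlled $X$ as its single non-Clifford gadget and absorbs the $j=0$ branch into Cliffords instead (the map $\ket{000}\mapsto\ket{100}$ fixing $\ket{010},\ket{001}$ is just $x_1\mapsto x_1-x_2-x_3+1$, i.e.\ two $CX^\dagger$'s and an $X$ on the first wire). You made the symmetric choice, retaining the $j=0$ gadget and collapsing the $j=2$ one. Both routes give $T$-count $3$, and both rest on exactly the two facts you isolate: $CGP$ is Clifford at $d=3$, and one of the two non-Clifford controlled-$X$'s can be replaced by an affine Clifford permutation. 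Your closing observation---that this collapse is peculiar to $d=3$ because for $d\geq 4$ one would need an affine map sending $\geq 3$ collinear points to affinely independent targets---is a nice structural explanation the paper does not spell out.
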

This achieves a new minimum single-qupit non-Clifford gate count to synthesize the $W$ state, and moreover deterministically.

\begin{remark}
When $d=2$, the construction in Proposition~\ref{prop:dqubitWstate} does indeed produce $\frac{1}{\sqrt{2}} \left( \ket{01}+\ket{10} \right)$.  However, as this is equivalent to the Bell state up to an $X$ gate, this does not possess the usefulness of a $W$ state of three or more qubits.
\end{remark}

\subsection{The controlled qupit \texorpdfstring{$H$}{H} gate}\label{sec:CH}
In the previous section, we showed how to construct the generalized qubit $W$ state of any prime size.  However, the scaling of $O(N^2)$ $\sqrt[d]{Z}$ gate count implies that this approach is less asymptotically efficient for large $d$.  In this section, in order to improve this to asymptotically linear gate count, we present how to construct a controlled qupit $H$ gate, up to a controlled global phase not of practical consequence in this work.  We do this by extending our controlled $H$ for qutrits in Ref.~\cite{YehL2022controlunivqutrit} to any prime qudit dimension.
\begin{lemma}
    Up to a global phase, in any prime qudit dimension, the $H$ gate can be decomposed into Clifford phase gates.
\end{lemma}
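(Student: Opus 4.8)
The plan is to find an explicit factorization of the qupit Hadamard $H$ into diagonal Clifford phase gates interleaved with one or a few fixed Clifford operations, thereby exhibiting $H$ (up to a scalar) as a product of Clifford gates that are themselves diagonal in the $Z$-basis. The natural starting point is the observation that $H$ is the discrete Fourier transform, whose matrix entries are $\frac{1}{\sqrt d}\,\omega^{jk}$. The classical way to expose the phases $\omega^{jk}$ as a sequence of \emph{diagonal} operations is via a Gauss-sum / quadratic-reciprocity identity: since $jk = \tfrac12\bigl(j^2 + k^2 - (j-k)^2\bigr)$ over $\mathbb{Z}_d$ (using that $2$ is invertible mod the odd prime $d$), the off-diagonal Fourier phase factors into a product of a phase depending only on the input index, a phase depending only on the output index, and a phase depending on their difference.

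Concretely, I would first define the diagonal quadratic phase gate $D = \sum_{j} \omega^{c j^2}\ket{j}\bra{j}$ for a suitable constant $c$ (this is a power of the $S$ gate, hence Clifford and diagonal). Since $S$ acts as $\ket{k}\mapsto \omega^{k(k-1)/2}\ket{k}$, it is already a quadratic-in-$k$ diagonal Clifford, so the required quadratic phases are obtainable as Clifford powers of $S$ up to a linear correction, and the linear correction $\ket{k}\mapsto\omega^{\alpha k}\ket{k}$ is a Pauli $Z$ power, also diagonal Clifford. Second, I would verify the Gauss-sum identity
\begin{equation}
H = \xi\; D\,\bigl(X\!\text{-diagonal}\bigr)\,D,
\end{equation}
more precisely that conjugating or sandwiching an appropriate quadratic-phase gate reproduces the Fourier matrix up to a constant $\xi$ (a normalized Gauss sum, which is a root of unity whose exact value is the global phase we are permitted to drop). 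The key computational check is that $\sum_{j}\omega^{c(j-k)^2 + \cdots}$ collapses to the correct single output phase times the $\tfrac{1}{\sqrt d}\omega^{jk}$ normalization; this is the standard evaluation of a quadratic Gauss sum over $\mathbb{Z}_d$.

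The main obstacle I anticipate is that a \emph{purely} diagonal decomposition is impossible, because every diagonal gate fixes the $Z$-basis whereas $H$ manifestly permutes $Z$- and $X$-bases; so the statement must be read as allowing the non-diagonal \emph{basis-change} role to be played by another copy of $H$ itself or by a Fourier-like Clifford, and the content is that all the \emph{non-trivial phase information} is carried by diagonal Cliffords. The honest way to phrase this is the identity $H = \xi\, S^{a} Z^{b}\, H\, S^{a} Z^{b}$-type relation, or equivalently $H^{-1} S^{a} H = $ (diagonal), which follows from the Clifford group relation that $H$ conjugates the quadratic phase gate into another quadratic phase gate. I would therefore phrase the lemma's proof as: pick $c$ so that $D_c := \sum_j \omega^{c j^2}\ket{j}\bra{j}$ satisfies $H = \xi\, D_c\, H_0\, D_c$ where $H_0$ is a fixed reference Clifford, and then check the constant $\xi$ is a root of unity by a Gauss-sum evaluation, discarding it as the allowed global phase.

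The remaining steps are routine: choose the integer $c$ to be the inverse of $2$ (or $-2$) modulo $d$ so that $c(j-k)^2$ produces the cross term $-2cjk \equiv jk$, express $D_c$ explicitly as a power of $S$ times a power of $Z$ using the given action of $S$, and confirm the Gauss sum $\sum_{j=0}^{d-1}\omega^{c j^2}$ equals $\sqrt d$ times a fourth root of unity (standard for odd prime $d$), which supplies both the $\tfrac{1}{\sqrt d}$ normalization and the discardable global phase $\xi$. I expect the only genuinely delicate point to be bookkeeping the normalization and verifying that each phase gate factor is genuinely Clifford (level two of the hierarchy) rather than higher, which is immediate since powers of $S$ and $Z$ remain Clifford.
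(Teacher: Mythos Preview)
Your approach is correct and matches the paper's: the paper writes $H = i^{(d-1)/2}\, Z(\vec\tau)\,X(\vec\tau)\,Z(\vec\tau)$ with $\tau(k)=\omega^{2^{-1}k^2\bmod d}$ and $X(\vec\tau)=H^\dagger Z(\vec\tau)H$, which is precisely the $D_c\cdot(X\text{-diagonal})\cdot D_c$ factorisation your polarization identity $jk=2^{-1}\bigl(j^2+k^2-(j-k)^2\bigr)$ and Gauss-sum evaluation produce, with $c=2^{-1}\bmod d$ and the Gauss sum supplying the $i^{(d-1)/2}$ global phase. Your anticipated obstacle evaporates once you note that the lemma's ``Clifford phase gates'' are meant to include $X$-basis phase gates as well as $Z$-diagonal ones, so the middle factor is itself a (Clifford) phase gate rather than a bare Fourier transform needing separate justification.
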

\begin{proof}
    The qudit $H$ gate can be decomposed into a $Z$ rotation followed by an $X$ rotation followed by a $Z$ rotation as~\cite{WangQ2021qufinite,NgKF2019euqpltalk}:
    \begin{align}\label{eq:EU}
        H &= i^{(d\minu 1)/2} \quad Z(\vec{\tau}) X(\vec{\tau}) Z(\vec{\tau})\\
        &= i^{(d\minu 1)/2} \,\, S \, Z^{(2^{\minu 1}\,\text{mod}\,d)} \,\, H^{\dagger} \, S \, Z^{(2^{\minu 1}\,\text{mod}\,d)} \, H \,\, S \, Z^{(2^{\minu 1}\,\text{mod}\,d)}
    \end{align}
    where
    \begin{equation}
        \tau(k) \coloneqq \omega^{2^{\minu 1} k^2\,\,\text{mod}\,\,d},
    \end{equation}
    \begin{equation}
        Z(\vec{v}) \coloneqq \sum_{k=0}^{d\minu 1} \omega^{v(k)} \ket{k}\bra{k},
    \end{equation}
    and
    \begin{equation}
        X(\vec{v}) = H^{\dagger} Z(\vec{v}) H.
    \end{equation}
\end{proof}

Our approach to controlling the $H$ gate is to control each component $Z$ and $X$ phase gate in Equation~\eqref{eq:EU}.  We apply strategies from past work: generalizing controlled phase gates from qubits to qutrits~\cite{vandeWetering2022qutritphasegadgets} and controlling the qutrit $H$ gate~\cite{GlaudellA2022qutritmetaplecticsubset}.  In future work we will consider generalizing to higher dimensions, adding any number of controls to qutrit gates~\cite{YehL2022controlunivqutrit}.

The key structural insight leverages the generalization to arbitrary dimensions of \emph{phase gadgets}, a symmetric multi-qudit interaction which applies a phase dependent on the sum of all computational basis states modulo $d$.  Phase gadgets have been characterized thus far in two graphical calculi: in the qubit ZX-calculus~\cite{CoeckeB2011interacting} in Ref.~\cite{KissingerA2020reduc,CowtanA2020phasegadget}, and in the qutrit ZX-calculus~\cite{WangQ2014qutritcalculus} in Ref.~\cite{vandeWetering2022qutritphasegadgets}.  We remark that phase gadgets are broadly applicable to all qudit dimensions as a graphical construct to relate diagonal gates in the circuit model and polynomial functions.

Hence, this gives us an idea of what sort of circuit structure may be a good place to start.  Keeping minimal on introducing formalism of the qudit $ZX$-calculus~\cite{RanchinA2014quditzx,WangQ2021qufinite,BoothR2022qupitstab}, we introduce non-unitary circuit elements which respectively act as addition modulo $d$ and copying of basis states, labelled $\ket{x}$ and $\ket{y}$ below:
\begin{equation}\label{eq:copysum}
    \tikzfig{copysum}
\end{equation}

More generally, these are called spiders.  They can have any number $m$ and $n$ of input and output wires, and can be labelled by a phase vector $\vec{\alpha}$ of length $d\minu 1$.  By convention, when $\vec{\alpha}$ is the zero vector, the label is omitted.
\begin{equation}
    \tikzfig{Zsp-phase} \ \coloneqq \ \ket{0}^{\otimes n}\bra{0}^{\otimes m} \ + \ \sum_{j=1}^{d\minu 1} {\omega}^{\alpha(j)} \ket{j}^{\otimes n}\bra{j}^{\otimes m}
\end{equation}
The X spider is defined likewise, substituting X-basis states for the Z-basis states here.  The Z and X spiders satisfy the fusion rule between same color spiders:
\begin{equation}
    \tikzfig{Zsp-fusion}
\end{equation}

The $CX$ gate is then representable as the composition of the two operations in Equation~\eqref{eq:copysum}, whereas its inverse $CX^{\dagger}$ is its diagram flipped horizontally.
\begin{equation}
    \tikzfig{CX}
\end{equation}
\begin{equation}
    \tikzfig{CXdag}
\end{equation}

Generalizing the two-qubit circuit for any controlled rotation, to (certain controlled rotations in) any qudit dimension:
\begin{equation}\label{eq:cphase}
    \tikzfig{cphase}
\end{equation}
It has a \emph{phase gadget}, a diagonal, symmetric two-qudit interaction:
\begin{equation}\label{eq:phasegadget}
    \tikzfig{phase-gadget-qudit-simp-2}
\end{equation}
where above we applied the qudit bialgebra rule:
\begin{equation}\label{eq:bialgebra}
    \tikzfig{bialgebra}
\end{equation}
For qubits, any diagonal qubit unitary can be expressed as a product of phase gadgets by writing the unitary as a phase polynomial~\cite{AmyM2019verification,vandeGriendAM2020architecturephasepoly}, as it serves as a good basis for optimising quantum circuits~\cite{CowtanA2020phasegadget,CowtanA2020uccansatz,deBeaudrapN2020reducepifourphase,deBeaudrapN2020treducspidernest,vandeWeteringJ2021globalgates,Backens2020extraction}. It may thus inform how to optimize qudit computation, building upon the work for qutrits in Ref.~\cite{TeagueAT2022qutritsimp}.

There are several different ways to define a two-qudit phase gadget.
One way is to consider it as the diagonal gate $\ket{x,y}\mapsto \omega^{\alpha (x\oplus y)} \ket{x,y}$. This applies a phase of $\omega^{\alpha (x\oplus y)}$ when $x\oplus y = 1$. Here $\oplus$ is the XOR operation, which is the addition on $\mathbb{Z}_d$.

Inputting to Equation~\eqref{eq:phasegadget} Z-basis states $\ket{x,y}$ (X spiders with labels $\vec{x} = [(d\minu 1)x,...,x]$ and $\vec{y} = [(d\minu 1)y,...,y]$) evaluates to:
\begin{equation}\label{eq:phase-gadgets-qudit-calc}
    \tikzfig{phase-gadget-qudit-calc}
\end{equation}

Next, we interpret each spider in the rightmost diagram as its corresponding linear map: the identity on Z-basis states accompanied by a `floating scalar' expression that evaluates to $\omega^{\minu \alpha (x+y \text{ mod }d)}$.  Hence this phase gadget indeed implements the operation we want.  Therefore, we see that these three ways to define a qubit phase gadget---via the action, via the circuit, or via the diagrammatic representation---are equal in all finite qudit dimensions.

For our first controlled $H$ gate construction, we seek to implement a controlled phase gate such that the phase gate $\vec{\tau}$ applies to the target when $x = 1$, and identity applies to the target when $x = 0$.  We can thus solve the below system of equations for $\vec{\alpha}$ to find when the circuit of Equation~\eqref{eq:cphase} implements $\sqrt[\left(2^{\minu 1}\,\,\text{mod}\,\,d\right)]{Z(\vec{\tau})}$:
\begin{equation}
    \begin{cases}
        \alpha(x) + \alpha(y) - \alpha(x+y\text{ mod }d) = 0 \ &\text{if}\ x = 0 \\
        \alpha(x) + \alpha(y) - \alpha(x+y\text{ mod }d) = y^2\text{ mod }d \ &\text{if}\ x = 1
    \end{cases}
\end{equation}
Setting $\alpha(0) = 0$ and solving for the remaining elements of $\vec{\alpha}$:
\begin{equation}
    \alpha(k) = \minu \sum_{j=0}^{k\minu 1} j^2 \text{ mod }d
\end{equation}
To implement $Z(\vec{\alpha})$, as a diagonal gate comprised solely of integer powers of $\omega$, we can always construct it from $P_1(k)$ gates decomposed in Equation~\eqref{eq:Pk}.  Then we can build controlled $H$, as applying the $Z(\vec{\alpha})$ gate $2^{\minu 1}\,\,\text{mod}\,\,d$ times gives $Z(\vec{\tau})$.  Since $\alpha(1) = 0$, we can omit the $Z(\vec{\tau})$ gate on the control of the phase gadget:
\begin{lemma}\label{lem:och}
    The following circuit performs identity on the target when the control is $\ket{0}$, and the $H$ gate up to a controlled global phase of $i^{(d\minu 1)/2}$ when the control is $\ket{1}$:
\begin{equation}\label{eq:och}
    \tikzfig{CH}
\end{equation}
\end{lemma}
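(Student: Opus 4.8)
The plan is to verify the circuit in Equation~\eqref{eq:och} by tracing its action on the two relevant control values, $\ket{0}$ and $\ket{1}$, reusing the decomposition $H = i^{(d\minu 1)/2} S\, Z^{(2^{\minu 1})} H^{\dagger} S\, Z^{(2^{\minu 1})} H\, S\, Z^{(2^{\minu 1})}$ established in the lemma preceding it together with the controlled-phase-gadget analysis of Equations~\eqref{eq:cphase}--\eqref{eq:phase-gadgets-qudit-calc}. The circuit is built by controlling each of the three $Z(\vec\tau)$ factors in that Euler decomposition, where the uncontrolled $H$, $H^{\dagger}$, and $S$ gates act unconditionally on the target. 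Since a controlled-$Z(\vec\tau)$ is exactly the controlled rotation $\sqrt[(2^{\minu 1}\bmod d)]{Z(\vec\tau)}$ realized via the phase gadget with phase vector $\vec\alpha$ determined just above the lemma statement, the whole circuit equals $\Lambda$ of the Euler product on the target, up to the scalar $i^{(d\minu 1)/2}$ that becomes a controlled global phase once the $H$ is controlled.

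First I would treat the $\ket{0}$ case. Here each controlled $Z(\vec\tau)$ factor must reduce to the identity on the target, which is precisely the top branch $\alpha(x)+\alpha(y)-\alpha(x+y\bmod d)=0$ of the defining system for $\vec\alpha$ at $x=0$; combined with the uncontrolled $S$, $H$, $H^{\dagger}$ gates telescoping back to the identity (since $H H^{\dagger}=\mathbb{I}$ and the Clifford pieces cancel in the $Z(\vec\tau)=\mathbb{I}$ specialization), the target is returned untouched. Next I would treat the $\ket{1}$ case, where the bottom branch $\alpha(x)+\alpha(y)-\alpha(x+y\bmod d)=y^2\bmod d$ forces each controlled factor to apply $Z(\vec\tau)$ exactly, so the product of the three controlled factors interleaved with $S$, $H$, $H^{\dagger}$ reconstitutes the Euler decomposition and hence $H$ on the target, leaving the residual scalar $i^{(d\minu 1)/2}$ as the claimed controlled global phase. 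Throughout I would invoke $\alpha(1)=0$ to justify dropping the $Z(\vec\tau)$ gate on the control line, as noted in the text, so that no spurious phase accrues when the control is $\ket{1}$ itself.

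The main obstacle, and the step deserving the most care, is bookkeeping the controlled global phase across the three controlled-$Z(\vec\tau)$ blocks. Each application of the phase-gadget construction produces a phase on the target that depends on $x+y\bmod d$, and one must confirm that these contributions, summed over the three factors and combined with the phase-gadget's own floating scalar $\omega^{\minu\alpha(x+y\bmod d)}$ from Equation~\eqref{eq:phase-gadgets-qudit-calc}, collapse exactly to $i^{(d\minu 1)/2}$ and nothing more when $x=1$, while vanishing entirely when $x=0$. I expect the cleanest route is to substitute the explicit solution $\alpha(k)=\minu\sum_{j=0}^{k\minu 1}j^2\bmod d$ into the per-factor action and observe that the telescoping across the $d$ repetitions inside each gadget, together with the $2^{\minu 1}\bmod d$ repetitions needed to promote $Z(\vec\alpha)$ to $Z(\vec\tau)$, reproduces exactly the $\tau(k)=\omega^{2^{\minu 1}k^2\bmod d}$ weights. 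Verifying that the leftover scalar is a \emph{controlled} rather than an absolute global phase---i.e.\ that it genuinely vanishes on the $\ket{0}$ branch---is the crux, and follows from the $x=0$ case of the defining system killing every factor of $\omega$ on the target simultaneously.
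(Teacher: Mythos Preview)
Your plan follows the paper's own derivation essentially verbatim: the lemma is stated in the paper as the conclusion of the preceding discussion, which decomposes $H$ via the Euler-type identity into three $Z(\vec\tau)$ factors (one conjugated by $H,H^{\dagger}$), controls each $Z(\vec\tau)$ with the phase-gadget construction solved for $\vec\alpha$, and invokes $\alpha(1)=0$ to drop the control-wire phase. Your two-branch check ($x=0$ yields identity because the gadget phase vanishes and the surviving $H,H^{\dagger}$ cancel; $x=1$ reconstitutes the Euler product and hence $H$ up to the scalar $i^{(d-1)/2}$) is exactly the intended verification.

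Two small clean-ups before you write it out. First, there are no separate uncontrolled $S$ gates on the target: $S\,Z^{(2^{-1})}$ \emph{is} $Z(\vec\tau)$, so once each $Z(\vec\tau)$ block is controlled there is nothing left besides the bare $H$ and $H^{\dagger}$, which cancel on the $\ket{0}$ branch; your parenthetical about ``Clifford pieces cancelling'' is therefore unnecessary. Second, there are no ``$d$ repetitions inside each gadget'' here --- you may be conflating this with the $\ket{0}$-controlled $X$ construction of Proposition~\ref{prop:zcx}; the only repetition is the $2^{-1}\bmod d$ copies of the $Z(\vec\alpha)$-gadget needed to promote the resulting $\ket{x,y}\mapsto\omega^{y^2}\ket{x,y}$ (on the $x=1$ branch) to the desired $\omega^{\tau(y)}=\omega^{2^{-1}y^2}$. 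With those corrections, the phase bookkeeping you flag as the ``main obstacle'' is in fact trivial: each controlled-$Z(\vec\tau)$ block contributes no extraneous phase by construction, so the residual $i^{(d-1)/2}$ comes solely from the Euler identity itself.
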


\begin{proposition}
    With a clean and uncomputed ancilla, we can adapt the construction of Lemma~\ref{lem:och} to implement the $\ket{1}$-controlled $H$ gate up to a controlled global phase:
\begin{equation}
    \tikzfig{OCH}
\end{equation}
\end{proposition}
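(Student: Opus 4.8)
The plan is to use the standard compute–uncompute pattern on the clean ancilla to convert the \emph{partial} controlled-$H$ of Lemma~\ref{lem:och}---which realizes a controlled-$H$ only when the control occupies $\{\ket{0},\ket{1}\}$ and has unspecified action on control values $\ket{k}$ with $k \geq 2$---into a genuine $\ket{1}$-controlled $H$ gate that acts as the identity for every control value $k \neq 1$. The key idea is to first distill the single predicate ``control $=\ket{1}$'' onto the ancilla, then run the circuit of Equation~\eqref{eq:och} with the ancilla (rather than the original control) serving as its control line, and finally uncompute the ancilla. Because the ancilla only ever occupies $\ket{0}$ or $\ket{1}$, the unspecified $k \geq 2$ behaviour of Lemma~\ref{lem:och} is never triggered.

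Concretely, I would proceed in three steps. First, apply a $\ket{1}$-controlled $X$ gate from the control onto the clean ancilla $\ket{0}$; by the corollary to Proposition~\ref{prop:zcx} this gate is obtained by conjugating the control of the $\ket{0}$-controlled $X$ construction by $X$. This sends the ancilla to $\ket{1}$ precisely when the control is $\ket{1}$ and leaves it in $\ket{0}$ otherwise, so the ancilla becomes a two-valued flag. Second, apply the Lemma~\ref{lem:och} circuit with the ancilla as its control: by that lemma the target receives $H$ (up to the global phase $i^{(d\minu 1)/2}$) exactly when the ancilla is $\ket{1}$, i.e.\ exactly when the original control is $\ket{1}$, and the identity in every other branch. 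Third, apply the inverse $\ket{1}$-controlled $X$ from the control onto the ancilla to restore it to $\ket{0}$.

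The point that must be verified is that this uncomputation is exact. Since the ancilla serves as the \emph{control} of the Lemma~\ref{lem:och} circuit, that circuit is block diagonal in the ancilla's $Z$-basis and hence leaves the ancilla's computational value unchanged; moreover, because the construction sets $\alpha(1)=0$ and omits the $Z(\vec{\tau})$ gate on the control line, no spurious phase is attached to the ancilla register either. Thus after the second step the ancilla is still $\ket{1}$ when the control is $\ket{1}$ and $\ket{0}$ otherwise, so the inverse flag of the third step returns it to $\ket{0}$ in every branch and it factors out cleanly. The leftover $i^{(d\minu 1)/2}$ is incurred only in the branch where $H$ is applied, so it is a controlled global phase conditioned on the control being $\ket{1}$---exactly the ``up to a controlled global phase'' tolerance of the statement---and could be corrected with one diagonal gate on the control in the spirit of Lemma~\ref{lemma:cgp}.

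The main obstacle I anticipate is not the compute–uncompute logic, which is routine, but the bookkeeping that guarantees the flag gate is correctly sequenced with the controlled phase gadgets of Equation~\eqref{eq:och} so that the $k \geq 2$ branches genuinely receive the identity on the target and the ancilla is returned to a product state with $\ket{0}$ rather than a spurious phased or shifted state. Once the invariance of the Lemma~\ref{lem:och} control register is confirmed, the result follows immediately.
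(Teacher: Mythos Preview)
Your proposal is correct and matches the paper's approach: the paper simply displays the compute--uncompute circuit (flag the ancilla with a $\ket{1}$-controlled $X$, apply the Lemma~\ref{lem:och} circuit with the ancilla as control, then unflag) without any accompanying prose, and your write-up supplies exactly the verification that this diagram works. Your observation that the ancilla stays in $\{\ket{0},\ket{1}\}$ so the unspecified $k\geq 2$ behaviour of Lemma~\ref{lem:och} is never invoked is precisely the point of the construction.
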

\begin{remark}
    For all $W$ states in this text, we conserve non-Clifford gate count by not correcting this controlled global phase; these controlled global phases collectively incur only a global phase for the final $W$ state. Moreover, whenever the control qubit is known to be in the $\{\ket{0}, \ket{1}\}$ subspace, for instance in Equation~\eqref{eq:Wcircspread}, the ancilla-free implementation in Equation~\eqref{eq:och} suffices.
\end{remark}

\subsection{Prime power size \texorpdfstring{$W$}{W} states}\label{sec:primepowerW}
The goal is to now scale the size of the generalized qubit $W$ state $W_{d^n}$ synthesized using qudit gates, to any prime power size.
This is accomplished by implementing a gate which has the action
\begin{equation}\label{eq:spreadcrit}
    \ket{0...0} \mapsto e^{i \theta} \ket{0...0} \quad \text{ and } \quad \ket{10...0} \mapsto e^{i \phi} W_{d^n}
\end{equation}
where $e^{i \theta}$ and $e^{i \phi}$ are any global phases.

\begin{lemma}
    The $\textsf{spread}$ gate, a $d$-qudit gate which fulfills the criteria of Equation~\eqref{eq:spreadcrit}, is implementable by the circuit
    \begin{equation}\label{eq:Wcircspread}
        \tikzfig{Wcircspread}
    \end{equation}
\end{lemma}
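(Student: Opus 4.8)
The plan is to verify the circuit of Equation~\eqref{eq:Wcircspread} directly on the two computational basis inputs that define the $\textsf{spread}$ gate in Equation~\eqref{eq:spreadcrit}, namely $\ket{0\ldots0}$ and $\ket{10\ldots0}$. Because the circuit is a (linear) unitary and the gate is only constrained on the span of these two inputs, checking both cases determines it. The two ingredients to reuse are the controlled $H$ gate of Lemma~\ref{lem:och}, which fires precisely when its control is $\ket{1}$ and is identity when it is $\ket{0}$, and the excitation-spreading subcircuits of Proposition~\ref{prop:dqubitWstate}---each a $\ket{k}$-controlled $X$ followed by $k$ copies of $CX^\dagger$---which relocate a lone $\ket{1}$ to its designated position while acting as identity on the all-zero string. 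I would run the argument as an induction on $n$, using the block decomposition $W_{d^n} = \tfrac{1}{\sqrt{d}} \sum_{i=0}^{d\minu 1} \ket{0}^{\otimes i d^{n\minu 1}} \otimes W_{d^{n\minu 1}} \otimes \ket{0}^{\otimes (d\minu 1\minu i) d^{n\minu 1}}$, so that the recursive invocations of $\textsf{spread}$ on the $d$ sub-blocks supply the $W_{d^{n\minu 1}}$ factors; the base case $n=1$ is the controlled analogue of Proposition~\ref{prop:dqubitWstate}.

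The $\ket{0\ldots0}$ case is immediate: every gate in Equation~\eqref{eq:Wcircspread} is controlled, so the controlled $H$ does not fire, no $\ket{k}$-controlled $X$ fires, and the $CX$ gates preserve the all-zero string; the state returns to $\ket{0\ldots0}$ up to the phase $e^{i\theta}$ collected from the deferred controlled global phases. For $\ket{10\ldots0}$, the controlled $H$ fires on the $\ket{1}$ control and produces the superposition $\tfrac{1}{\sqrt{d}}\sum_{j=0}^{d\minu 1}\ket{j}$ on its target---exactly the $\ket{+}$ configuration that seeds Proposition~\ref{prop:dqubitWstate}. The spreading subcircuits then route, for each branch $j$, a single excitation to the first site of the $j$-th sub-block, yielding $\tfrac{1}{\sqrt{d}}\sum_j \ket{0}^{\otimes j d^{n\minu 1}} \otimes \ket{10\ldots0} \otimes \ket{0}^{\otimes (d\minu 1\minu j) d^{n\minu 1}}$; applying $\textsf{spread}$ recursively to each sub-block replaces each $\ket{10\ldots0}$ by $W_{d^{n\minu 1}}$ (inductive hypothesis) and leaves the empty sub-blocks at zero (the all-zero case above), reproducing the block decomposition of $W_{d^n}$ up to a global phase $e^{i\phi}$.

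The step I expect to be the main obstacle is the phase bookkeeping. Each controlled $H$ (Lemma~\ref{lem:och}) and each $\ket{0}$-controlled $X$ (Proposition~\ref{prop:zcx}) is correct only up to a controlled global phase, and in the recursion these accumulate across all $d^n$ branches. I must verify that, on the $\ket{10\ldots0}$ input, every branch of the $W$ state picks up the \emph{same} phase so that it factors out as a single $e^{i\phi}$ rather than a relative phase that would corrupt the superposition; symmetry of the construction across the $d$ sub-blocks, together with deferring all corrections as in Proposition~\ref{prop:dqubitWstate}, should make this phase uniform.
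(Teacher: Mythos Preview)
Your verification is sound in spirit but over-scoped. The \textsf{spread} gate is stated to be a $d$-qudit gate, so the instance of Equation~\eqref{eq:spreadcrit} it realises has $n=1$: the circuit of Equation~\eqref{eq:Wcircspread} sends $\ket{10\ldots0}$ to $W_d$, not to a general $W_{d^n}$. The induction on $n$ and the block decomposition you set up are precisely the content of Proposition~\ref{prop:spreadout}, which \emph{stacks} $n$ layers of \textsf{spread} gates; the Lemma itself is only what you call the base case.

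On that base case your reasoning coincides with the paper's. The paper gives no formal proof beyond the remark that the blue box in Equation~\eqref{eq:Wcircspread} is exactly the circuit of Equation~\eqref{eq:Wcirc}: once the controlled $H$ of Lemma~\ref{lem:och} has produced the $\ket{+}$ seed from the $\ket{1}$ control, correctness on the $\ket{10\ldots0}$ input is Proposition~\ref{prop:dqubitWstate} verbatim. Your phase-bookkeeping worry is correspondingly lighter than you feared, since only a single layer of uncorrected controlled global phases is involved and uniformity across the $d$ branches follows from one invocation of the $W_d$ construction rather than an exponential tree of them. One small caution on the $\ket{0\ldots0}$ case: the blanket claim that ``no $\ket{k}$-controlled $X$ fires'' is not literally true, because the blue box (being Equation~\eqref{eq:Wcirc}) contains a $\ket{0}$-controlled $X$ whose control condition \emph{is} met on the all-zero string; you should trace the gates outside the blue box in Equation~\eqref{eq:Wcircspread} to see why the net effect on $\ket{0\ldots0}$ is still the identity up to phase.
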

Note that the blue box contains the circuit in Equation~\eqref{eq:Wcirc}.

\begin{proposition}\label{prop:spreadout}
    By recursively applying the $\textsf{spread}$ gate to each output qupit tensored with $d\minu 1$ $\ket{0}$ qupits, for a total of $n$ layers, the entanglement can be uniformly spread to build the $N=d^n$ size $W$ state up to a global phase with $O(N)$ $\sqrt[d]{Z}$ count.
    \begin{equation}\label{eq:spreadout}
        \tikzfig{spreadout}
    \end{equation}
    Substitute the first $\textsf{spread}$ gate by Equation~\eqref{eq:Wcirc} to reduce gate count.
\end{proposition}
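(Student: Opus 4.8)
The plan is to proceed by induction on the number of layers, showing that after $k$ layers the circuit produces $W_{d^k}$ up to a global phase on $d^k$ qudits, with the $\textsf{spread}$ gate furnishing the inductive step. For the base case $k=1$, the first layer produces $W_d$ up to a phase: either by a single $\textsf{spread}$ gate acting on $\ket{1}\ket{0}^{\otimes(d\minu 1)}$ via the defining action of Equation~\eqref{eq:spreadcrit}, or equivalently (to save gates) by the circuit of Equation~\eqref{eq:Wcirc} from Proposition~\ref{prop:dqubitWstate}. The inductive step is to tensor each of the $d^{k\minu 1}$ output qudits of $W_{d^{k\minu 1}}$ with $d\minu 1$ fresh $\ket{0}$ qudits, forming $d^{k\minu 1}$ disjoint blocks of $d$ qudits, and to apply one $\textsf{spread}$ gate to each block in parallel.

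First I would verify correctness of one inductive step by tracing the action term by term on $W_{d^{k\minu 1}}$. Every computational basis term carries exactly one excitation, say a single qudit in $\ket{1}$ with all others $\ket{0}$; after appending ancillas, the block holding the excitation is $\ket{1}\ket{0}^{\otimes(d\minu 1)}$ and every other block is $\ket{0}^{\otimes d}$. Both states lie in the $\{\ket{0\dots0},\ket{10\dots0}\}$ subspace on which the $\textsf{spread}$ gate is defined, and the first qudit of each block is in the $\{\ket{0},\ket{1}\}$ subspace, so the ancilla-free controlled-$H$ of Equation~\eqref{eq:och} applies as noted in the preceding remark. By Equation~\eqref{eq:spreadcrit} the excited block maps to $e^{i\phi}W_d$ and each zero block to $e^{i\theta}\ket{0}^{\otimes d}$; expanding $W_d$ inside the excited block and summing over the position of the excitation yields a uniform superposition of single excitations over all $d^k$ positions with normalization $1/\sqrt{d^k}$, i.e. $W_{d^k}$, using that the spread gates on disjoint blocks tensor-factor and so act independently on each product term.

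The step I expect to be the main obstacle is the phase bookkeeping. Each term of $W_{d^{k\minu 1}}$ acquires the factor $e^{i\phi}(e^{i\theta})^{d^{k\minu 1}\minu 1}$, one $e^{i\phi}$ from its excited block and one $e^{i\theta}$ from each of the $d^{k\minu 1}\minu 1$ zero blocks. The crucial point is that this factor is \emph{independent} of which qudit carried the excitation, precisely because every basis term of a $W$ state has the same number of excited and zero blocks; hence the accumulated phases factor out as a single global phase at each layer, and the output is $W_{d^k}$ up to a global phase, closing the induction.

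Finally I would count gates. Layer $k$ uses $d^{k\minu 1}$ $\textsf{spread}$ gates on disjoint blocks, so the geometric series over all $n$ layers gives $\sum_{k=1}^{n} d^{k\minu 1} = (d^n\minu 1)/(d\minu 1)$ $\textsf{spread}$ gates. For fixed $d$ each $\textsf{spread}$ gate contains $O(1)$ $\sqrt[d]{Z}$ gates (the $d^2\minu d$ gates of Equation~\eqref{eq:Wcirc} together with the constant-count controlled-$H$), so the total $\sqrt[d]{Z}$ count is $O(d^n)=O(N)$. Moreover, since the $d^{k\minu 1}$ gates within each layer act on disjoint qudits, each layer has constant depth, and the $n=\log_d N$ layers give overall depth $O(\log N)$.
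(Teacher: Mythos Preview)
Your proposal is correct and follows the same approach as the paper: the gate-count argument via the geometric series $\sum_{j=0}^{n-1} d^{j} = (d^n - 1)/(d-1) = O(N)$ is exactly what the paper does. You additionally spell out the inductive correctness argument, the phase bookkeeping, and the $O(\log N)$ depth, all of which the paper leaves implicit; these are sound and make your write-up more complete rather than different in spirit.
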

\begin{proof}
    The $i^{\text{th}}$ layer of $\textsf{spread}$ gates consists of $d^{i\minu 1}$ $\textsf{spread}$ gates.  Therefore, the total number of $\textsf{spread}$ gates is
    \begin{equation}
        \sum_{j=0}^{n\minu 1} d^{j} = \frac{d^n \minu 1}{d \minu 1} = O(N)
    \end{equation}
    As each $\textsf{spread}$ gate has gate count independent of $N$, the total $\sqrt[d]{Z}$ count is $O(N)$.
\end{proof}

\begin{remark}
    The qubit version of the above method was discovered, albeit not well-known, in 2018 in a StackExchange post by Gidney~\cite{GidneyC2018Wstatemagic}.  It achieves a $T$-count of $2N\minu 4$, establishing an upper bound for the minimum number of $T$ gates needed to implement the $W$ state.  Since then, the asymptotic lower bound has been found to match the upper bound: based on a reasonable complexity-theoretic conjecture, Arunachalam, Bravyi, Nirkhe, and O'Gorman found that any qubit Clifford+$T$ circuit building an $N$-qubit $W$-state must contain at least a number of $T$ gates linear in $N$~\cite{ArunachalamS2022WstateTcount}.  It would be interesting to investigate the non-Clifford resource requirements of the generalized qubit and qudit $W$ states in the qupit setting.
\end{remark}

\subsubsection{Prime power size qupit \texorpdfstring{$W$}{W} states}
Furthermore, we generalize the constructions thus far from \emph{qubit} $W$ states to \emph{qudit} $W$ states.
Consider this modification of the $\textsf{spread}$ circuit in Equation~\eqref{eq:Wcircspread}:
\begin{equation}\label{eq:Wcircspreadqudit}
    \tikzfig{Wcircspreadqudit}
\end{equation}
For inputs of the form $\ket{k0...0}$ for $k \in \mathbb{Z}_d$, up to phases we can safely ignore, it has action
\begin{equation}
    \begin{cases}
        \ket{k0...0} \mapsto \ket{k0...0} \ &\text{if }\ k = 0 \\
        \ket{k0...0} \mapsto \frac{1}{\sqrt{d}} \left( \sum_{j=1}^{d\minu 1} \ket{0}^{\otimes k} \ket{j} \ket{0}^{\otimes (d\minu k\minu 1)} + \ket{k0...0} \right) \ &\text{if }\ k \neq 0
    \end{cases}
\end{equation}

The $d$-qudit $W$ state is achieved by inputting to the top qudit of the above circuit the resource state
\begin{equation}
    \tikzfig{nonzero}
\end{equation}
and inputting $\ket{0}$ for all other qudits.

Moreover, as we did for qubit $W$ states, we can likewise exponentially magnify the size of this qudit $W$ state.  This is because the above circuit maps, for some irrelevant global phases $e^{i \theta}$ and $e^{i \phi}$,
\begin{equation}\label{eq:spreadcrit2}
    \ket{0...0} \mapsto e^{i \theta} \ket{0...0} \quad \text{ and } \quad \left(\frac{1}{\sqrt{d\minu 1}} \sum_{j=1}^{d\minu 1} \ket{j}\right)\ket{0...0} \mapsto e^{i \phi} W_{d^n}
\end{equation}
Therefore,
\begin{proposition}\label{prop:quditspreadout}
We can build the size $N = d^n$ qudit (of prime dimension $d$) $W$ state by inputting $\left(\frac{1}{\sqrt{d\minu 1}} \sum_{j=1}^{d\minu 1} \ket{j}\right)\ket{0...0}$ into $n$ layers of the circuit in Proposition~\ref{prop:spreadout}, where the qudit $W$ state $\textsf{spread}$ gate gate is given in Equation~\eqref{eq:Wcircspreadqudit}.
This has $O(N)$ $\sqrt[d]{Z}$ count because the $\sqrt[d]{Z}$ count for each of the $O(N)$ $\textsf{spread}$ gates is independent of $N$ for fixed $d$.
This requires no further post-selection other than the initial copy of the above resource state.
\end{proposition}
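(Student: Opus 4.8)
The plan is to mirror the proof of Proposition~\ref{prop:spreadout}, replacing the qubit spread gate by its qudit analogue in Equation~\eqref{eq:Wcircspreadqudit} and arguing correctness by induction on the number of layers $n$. First I would record the two facts I am allowed to assume about a single qudit spread gate, namely the pair of conditions in Equation~\eqref{eq:spreadcrit2}: it fixes $\ket{0\dots0}$ up to a global phase, and it sends the seed state $\left(\tfrac{1}{\sqrt{d\minu 1}}\sum_{j=1}^{d\minu 1}\ket{j}\right)\ket{0\dots0}$ to the $d$-qudit $W$ state of Definition~\ref{def:quditWstate} up to a global phase. The base case $n=1$ is then immediate. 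For the inductive step I would assume that $n$ layers applied to the seeded input produce $\ket{W_{d^n}}$ and show that one further layer produces $\ket{W_{d^{n+1}}}$.

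The heart of the argument is this inductive step, and it is where I expect the main obstacle to lie. The key structural observation is that every computational basis term of $\ket{W_{d^n}}$ has exactly one qudit in a nonzero state $\ket{v}$, $v\in\{1,\dots,d\minu 1\}$, and all other $d^n\minu 1$ qudits in $\ket{0}$. When the next layer tensors each output qudit with $d\minu 1$ fresh $\ket{0}$ qudits and applies a spread gate to each resulting $d$-qudit block, the unique nonzero block evolves by the $k\neq 0$ branch of the action stated below Equation~\eqref{eq:Wcircspreadqudit}, while every other block evolves by the $k=0$ branch and stays $\ket{0\dots0}$. I would then fix a block and sum the contributions of the $d\minu 1$ terms of $\ket{W_{d^n}}$ that place a nonzero value in that block: the $k\neq 0$ branch places value $j$ at within-block position $v$ and value $v$ at within-block position $0$, so summing over $v$ covers every within-block position in $\{0,\dots,d\minu 1\}$ with every value in $\{1,\dots,d\minu 1\}$ exactly once. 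Since distinct blocks occupy disjoint global positions, the $d^{n+1}$ qudits together realize $\ket{W_{d^{n+1}}}$, and it remains to check that the uniform amplitude matches the normalization $\tfrac{1}{\sqrt{(d\minu 1)N}}$ of Definition~\ref{def:quditWstate} at $N=d^{n+1}$.

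The delicate point within this step is phase bookkeeping: I must ensure the relative phases contributed by the individual blocks combine into a single global phase rather than term-dependent phases that would spoil the uniform superposition. This follows because every term of $\ket{W_{d^n}}$ has the \emph{same} number of zero blocks, namely $d^n\minu 1$, together with exactly one spreading block, so the identity phases factor out uniformly across terms; the remaining controlled global phases incurred by the spread gate are, as noted in the remark on uncorrected controlled global phases following the controlled-$H$ construction, collectively only a global phase on the final state and hence harmless. I would lean on that remark rather than re-deriving the phase of each component.

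Finally, the gate-count and post-selection claims are routine. The count is identical to Proposition~\ref{prop:spreadout}: the $i^{\text{th}}$ layer contains $d^{i\minu 1}$ spread gates, so there are $\sum_{j=0}^{n\minu 1} d^{j}=\tfrac{d^n\minu 1}{d\minu 1}=O(N)$ spread gates, each of $\sqrt[d]{Z}$-count independent of $N$ for fixed $d$, giving $O(N)$ in total. For determinism I would observe that every spread gate is a unitary Clifford$+\sqrt[d]{Z}$ circuit, so the only non-unitary ingredient is the single preparation of the seed resource state; hence no further post-selection is required, as claimed.
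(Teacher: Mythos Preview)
Your proposal is correct and follows the same recursive-layering approach the paper intends. In fact the paper offers no proof for this proposition beyond the one-sentence gate-count justification embedded in the statement and the surrounding discussion of the qudit \textsf{spread} gate's action; your inductive argument, the per-block accounting of positions and values, and the phase-bookkeeping observation spell out exactly the details the paper leaves implicit.
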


\subsection{Building any size \texorpdfstring{$W$}{W} states}\label{sec:othersizeW}
In the fault-tolerant setting, protocols to generate large $W$ states should be both scalable, robust against errors, and deterministic.  The above protocols as-is generate only prime power size $W$ states.
We now present two ways to construct any size qubit $W$ state.

\subsubsection{By switching dimensions mid-computation}
Here, we modify earlier constructions to deterministically build any composite (i.e. non-prime) size $N$-qubit $W$ states.
This is by generalizing Proposition~\ref{prop:spreadout} to allow $\textsf{spread}$ gates of differing qudit dimensions.

\begin{proposition}\label{prop:nonprimeW}
    For any composite number $N$ with prime factorization $N = N_1 \times N_2 \times ... \times N_f$ (which is unique up to permutation), a deterministic circuit to build the $N$-qubit $W$ state is the circuit of Proposition~\ref{prop:spreadout}, where the $i^{\text{th}}$ layer of $\textsf{spread}$ gates is of qudit dimension $N_i$.
    This has $O(N)$ non-Clifford gate count because the $\sqrt[\text{\tiny $\raisebox{1pt}{$N_i$}$}]{Z}$ count for each of the $O(N)$ $\textsf{spread}(N_i)$ gates is independent of $N$.
\end{proposition}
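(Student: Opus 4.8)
The plan is to run the proof of Proposition~\ref{prop:spreadout} essentially unchanged, replacing the single recursion dimension $d$ by the per-layer dimensions $N_1,\dots,N_f$, and to argue correctness by induction on the layer index. The one genuinely new ingredient is that consecutive layers act on qudits of \emph{different} dimensions, so I must first justify that this dimension switching is well defined. The enabling observation is that, on the only two basis inputs that ever reach a spread gate in this circuit, namely $\ket{0\cdots 0}$ and $\ket{10\cdots 0}$, the $\textsf{spread}(m)$ gate stays inside the qubit $\{\ket 0,\ket 1\}$ subspace of its $m$ output qudits (by its defining action, Equation~\eqref{eq:spreadcrit}: $\ket{0\cdots 0}\mapsto e^{i\theta}\ket{0\cdots 0}$ and $\ket{10\cdots 0}\mapsto e^{i\phi}\ket{W_m}$). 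Hence every wire crossing a layer boundary carries only $\ket 0$ or $\ket 1$, and a wire emerging in such a qubit state from a dimension-$N_{i-1}$ layer can be embedded into the $\{\ket 0,\ket 1\}$ subspace of a fresh dimension-$N_i$ qudit fed to the next $\textsf{spread}(N_i)$ gate, whose $N_i-1$ companion qudits start in $\ket 0$. This makes the mixed-dimension circuit well defined.

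For correctness I would prove by induction on $i$ that, up to a global phase, after the first $i$ layers the state is $\ket{W_{M_i}}$ with $M_i\coloneqq\prod_{j=1}^{i}N_j$, supported in the qubit subspace. The base case is the input $\ket 1\ket 0\cdots=\ket{W_1}$. For the step, write $\ket{W_{M_{i-1}}}=\tfrac{1}{\sqrt{M_{i-1}}}\sum_k\ket{0\cdots 0\,1\,0\cdots 0}$ over the $M_{i-1}$ positions $k$ of the single excitation; layer $i$ applies one $\textsf{spread}(N_i)$ gate per wire. By linearity, in the branch with the excitation on wire $k$ that wire's gate sends $\ket 1\ket 0^{\otimes(N_i-1)}\mapsto e^{i\phi}\ket{W_{N_i}}$ while each of the other $M_{i-1}-1$ gates sends $\ket 0^{\otimes N_i}\mapsto e^{i\theta}\ket 0^{\otimes N_i}$, so the excitation spreads uniformly over its $N_i$ children and, across all branches, over $M_{i-1}N_i=M_i$ wires, giving $\ket{W_{M_i}}$. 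The phase bookkeeping is where I would be most careful: every branch accumulates exactly one factor $e^{i\phi}$ and $M_{i-1}-1$ factors $e^{i\theta}$, so the per-branch factor $e^{i\phi}(e^{i\theta})^{M_{i-1}-1}$ is independent of $k$ and collapses to a single global phase; and since $\ket{W_{N_i}}$ of Definition~\ref{def:qubitWstate} carries no internal relative phases, none are introduced within a layer either. After $f$ layers the state is $\ket{W_N}$ up to global phase, and since the circuit is unitary with no measurement or post-selection it is deterministic.

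For the gate count, the $i^{\text{th}}$ layer contains one $\textsf{spread}(N_i)$ gate per wire, i.e.\ $\prod_{j=1}^{i-1}N_j$ gates, so the total number of spread gates is $\sum_{i=1}^{f}\prod_{j=1}^{i-1}N_j$. Writing $\prod_{j=1}^{i-1}N_j=N\big/\prod_{j=i}^{f}N_j\le N/2^{\,f-i+1}$ (each prime $N_j\ge 2$), this is bounded by $N\sum_{\ell=1}^{f}2^{-\ell}<N$, so there are $O(N)$ spread gates, specializing correctly to $\tfrac{d^n-1}{d-1}$ when all $N_j=d$. Because each $\textsf{spread}(N_i)$ gate has a $\sqrt[N_i]{Z}$ count depending only on the prime factor $N_i$ and not on $N$, the total non-Clifford count is at most $\max_i(\text{count of }\textsf{spread}(N_i))\cdot O(N)=O(N)$.

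The main obstacle I expect is not the arithmetic but making the dimension switching airtight: one must verify that re-reading a qubit-subspace wire of dimension $N_{i-1}$ as the $\{\ket 0,\ket 1\}$ subspace of a dimension-$N_i$ qudit is a legitimate physical operation, and that the $\textsf{spread}(N_i)$ gate never drives such a wire out of the qubit subspace on the inputs that actually arise. Both points reduce to the subspace-confinement property of Equation~\eqref{eq:spreadcrit}, which is exactly why I would isolate that observation at the outset and carry it through the induction as part of the hypothesis.
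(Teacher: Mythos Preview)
Your proposal is correct and follows the same route the paper intends: generalize the layered construction of Proposition~\ref{prop:spreadout} by letting the $i^{\text{th}}$ layer use $\textsf{spread}(N_i)$ gates, and count spread gates by a geometric-type sum. In fact the paper gives no separate proof for this proposition; the $O(N)$ justification is folded into the statement itself, and the dimension-switching issue you single out is not resolved in the paper but instead deferred to the subsequent Remark, which notes that composing layers of different qudit dimension requires an external mechanism such as code switching. Your argument is therefore more complete than the paper's on two fronts: you make explicit the subspace-confinement observation (outputs of each $\textsf{spread}$ gate stay in the $\{\ket 0,\ket 1\}$ subspace on the relevant inputs, so the inter-layer reinterpretation is at least mathematically well defined), and you carry out the phase bookkeeping and the gate-count bound $\sum_{i=1}^{f}\prod_{j<i}N_j<N$ in detail rather than asserting $O(N)$. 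One small refinement: the phases $\theta,\phi$ in Equation~\eqref{eq:spreadcrit} may depend on the layer's dimension $N_i$, so in the inductive step you should write them as $\theta_i,\phi_i$; your argument still goes through because all gates within a layer share the same dimension, so the per-branch factor $e^{i\phi_i}(e^{i\theta_i})^{M_{i-1}-1}$ remains branch-independent.
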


\begin{remark}
    Even if for any prime qudit dimension $d$ the\linebreak Clifford+$\sqrt[d]{Z}$ gate set were efficiently implementable fault-tolerantly, for the circuit in Proposition~\ref{prop:nonprimeW} to be efficiently implementable fault-tolerantly, some mechanism to compose different qudit dimension states or gates is necessary.  One solution would be to find a \emph{code switching} protocol~\cite{Kubica2018TheAO} which can switch mid-computation from the quantum error correcting code in use to another of a different qudit dimension.  A framework to consider for computations involving differing qudit dimensions is the \emph{qufinite ZX-calculus}~\cite{WangQ2021qufinite}.
\end{remark}

\begin{example}
    Three deterministic 6-qubit $W$ state circuits:
    \begin{equation}
        \tikzfig{W6}
    \end{equation}
    The rightmost circuit above is moreover fault-tolerant: distill the qutrit magic state $\ket{+_2} = \frac{1}{\sqrt{2}} \left(\ket{0} + \ket{1}\right)$ (a qubit X-basis state) in the 5-qutrit stabilizer code~\cite{AnwarH2012qutritmsd,DawkinsH2015qutritmsdtight} or the 11-qutrit Golay code~\cite{PrakashS2020qutritgolay}, then do in a qutrit stabilizer code CX$^{\dagger}$ and two \textsf{spread(3)} gates.
\end{example}

\subsubsection{By post-selection of the $W$ state}
To attain other size $N$-qubit W states via post-selection, necessitates $d^{\ceil{\text{log}_d{N}}} \minu N$ qubits to be successfully post-selected on the $\ket{0}$ state.  If this fails, this protocol must be repeated until success; the expected number of attempts is
\begin{equation}
    \left( \prod_{j=N+1}^{d^{\ceil{\text{log}_d{N}}}}{\left(1 \minu \frac{1}{j}\right)} \right) \sum_{h=0}^{\infty}{(h+1) \left(1 \minu \prod_{k=N+1}^{d^{\ceil{\text{log}_d{N}}}}{\left(1 \minu \frac{1}{k}\right)}\right)^h} = \,\, \frac{d^{\ceil{\text{log}_d{N}}}}{N}.
\end{equation}
This is a geometric series of the form $(1\minu r) \sum_{h=0}^{\infty} (h+1) r^h = \frac{1}{1\minu r}$, where $1\minu r$ is a recurrence relation that solves to $\frac{N}{d^{\ceil{\text{log}_d{N}}}}$.  Hence for large $N$, the expected number of attempts is at most $d$.  In other words, the expected gate count to create an $N$-qubit $W$ state is at most a factor of $d$ greater than the optimal scenario where the post-selection succeeds on the first try.  Thus, the expected gate count remains linear in $N$.

\subsection{Considerations for network topologies}\label{sec:network}
The protocols in this work only use one- and two-qudit gates. It is easy to see that, in order to perform a two-qudit gate on two qudits which have a path length of $b$ nodes apart in the connectivity graph, an overhead of $O(b)$ swaps per two-qudit gate suffices to perform a two-qudit gate between them.  Therefore, the gate count of the protocol adapted to any quantum network topology is $O(bN)$.

Moreover, within each \textsf{spread} gate, the two-qudit gates you do are always less than $d$ distance apart.  Therefore, assuming nearest-neighbor connectivity in the network of each node (i.e. each logical qudit) participating in the protocol, the overhead due to routing can be bounded by a factor of $O(d)$.  Thus, the best possible performance of these protocols is when the network topology graph contains as a subgraph, the tree whose shape is that of the circuit in Equation~\eqref{eq:spreadout}.

\section{Conclusion}\label{sec:conclusion}
In this work, we first introduced the Clifford+$\sqrt[d]{Z}$ gate set and discussed its prospects for fault-tolerant quantum computation for arbitrary prime qudit dimension.  We then constructed the $\ket{0}$-controlled $X$ gate, which in forthcoming work we will show enables constructing any $d$-ary reversible classical gate.  We followed that with a construction of the controlled $H$ gate in any odd qudit dimension, which we used to construct boundlessly large qubit and qudit $W$ states of any prime power size for any prime qudit dimension.

The biggest open problem in relation to this work is the conjecture by deSilva that single-qudit gates in any level of the Clifford hierarchy can be implemented fault-tolerantly~\cite{deSilvaN2021qupitgateteleportation}.  If true, then the quantum operations available to us in the fault-tolerant regime is richer than we know it to be today.  For one, the fault-tolerant viability of emulation of binary operations on prime dimension qudits hinges on this question, as the cost of doing so increases with the qudit dimension $d$.  If the conjecture is false, then the interesting question arises of what operations can be done efficiently fault-tolerantly, more specifically characterizing this fragment of the logically useful constructions such as classical reversible gates and multipartite entangled resource states.

We are curious to see the problem of qudit $W$ state synthesis approached further by utilizing techniques from the study of resource theory of non-stabilizer states.  While constructions upper bound the minimum cost of synthesizing a resource state, computing the minimum `magic' (a measure of non-stabilizerness) inherent to a resource state can be effective at establishing a lower bound.  This may provide insight into comparison of deterministic (EQP) versus probabilistic (BQP) resource state synthesis.

Building upon these constructions, we aim to convert between photonic quantum computing and the circuit model.  The W algebra and its interaction with the Z algebra underpin the ZW-calculus which has applications especially in photonic quantum computing.  The translation of W state into ZX is a necessary prerequisite for proving completeness (that the equational theory supported by the graphical calculus is rich enough to derive all equalities of the underlying linear maps) of the qudit ZW-calculus~\cite{CoeckeB2011ThreeQubitEntanglement,hadzihasanovicDiagrammaticAxiomatisationQubit2015} from the very recent proof of qudit ZXW -calculus completeness~\cite{PoorB2023quditZXWcompleteness}.  This may facilitate investigations into reasoning involving controlled unitaries and Hamiltonian exponentiation~\cite{ShaikhRA2022zxwham}, multiplexors~\cite{HerrmannThesis}, and boson-fermion interaction ~\cite{HadzihasanovicA2018zxzwaxioms,deFeliceG2019fermioniccalc}.

\begin{acks}
    We would like to thank John van de Wetering for discussions on constructing qudit logical gates and feedback on the paper structure, Craig Gidney for insight into different ways of constructing $W$ states and qudit logical gates, Patrick Roy for discussions on qupit graphical calculi, Alexander Cowtan for reviewing the academic writing, Razin Shaikh for suggestions on the graphical presentation, and Tam\'as V\'ertesi for perspective on applications of multipartite entanglement.  LY is supported by an Oxford - Basil Reeve Graduate Scholarship at Oriel College with the Clarendon Fund.
\end{acks}

\bibliographystyle{ACM-Reference-Format}
\bibliography{main.bib}

\end{document}